\newenvironment{mathprooftree}
  {\varwidth{.9\textwidth}\centering\leavevmode}
  {\DisplayProof\endvarwidth}
\newcommand{\mynewtheorem}[2]{
  \newaliascnt{#1}{dummy}
  \newtheorem{#1}[#1]{#2}
  \aliascntresetthe{#1}
  \expandafter\def\csname #1autorefname\endcsname{#2}
}
\theoremstyle{definition}
\theoremstyle{definition}
\newcommand{\nats}{\mathbb{N}}
\newcommand{\C}{\mathbb{C}}
\newcommand{\Z}{\mathbb{Z}}
\newcommand{\set}[1]{\{#1\}}
\DeclareMathOperator{\cat}{\mathsf{Cat}}
\newcommand{\catalan}[1]{{\cat}_{#1}}
\renewcommand{\S}{\mathbf{S}}
\newcommand{\K}{\mathbf{K}}
\newcommand{\rg}[1]{\mathcal{R}_{#1}}
\newcommand{\rgf}[1]{R_{#1}(z)}
\newcommand{\domsing}[1]{\zeta_{#1}}
\newcommand{\B}{\mathcal{B}}
\newcommand{\T}[1]{T_{#1}}
\newcommand{\comb}[1]{\mathbf{#1}}
\newcommand{\density}[2]{\mu \left( \frac{#1}{#2}\right)}
\newcommand{\densityplus}[2]{\mu^{+} \left( \frac{#1}{#2} \right) }
\newcommand{\densityminus}[2]{\mu^{-} \left( \frac{#1}{#2} \right) }
\newcommand{\class}[1]{{\mathcal C}_{#1}}
\newcommand{\WN}{\mathcal{WN}}
\begin{document}
\title{On the likelihood of normalisation in combinatory logic}
\author{Maciej Bendkowski \and Katarzyna Grygiel \and Marek Zaionc}
\address{
  Theoretical Computer Science Department\\
  Faculty of Mathematics and Computer Science\\
  Jagiellonian University\\
  ul. Prof. {\L}ojasiewicza 6, 30-348 Krak\'ow, Poland}
\email{\{bendkowski,grygiel,zaionc\}@tcs.uj.edu.pl}
\thanks{This work was partially supported within the grant 2013/11/B/ST6/00975 founded by the Polish National Science Center. Maciej Bendkowski was supported with the computational grant plgmbendkow2015a by ACC Cyfronet AGH in the PL-Grid NGI project}
\keywords{Combinatory logic, analytic combinatorics, normalisation}
\date{\today}

\begin{abstract}
We present a quantitative basis-independent analysis of combinatory logic.
Using a general argument regarding plane binary trees with labelled leaves, we generalise the results of David et al. (see~\cite{dgkrtz}) and Bendkowski et al. (see~\cite{Bendkowski2015}) to all Turing-complete combinator bases proving, inter alia, that asymptotically almost no combinator is strongly normalising nor typeable. We exploit the structure of recently discovered normal-order reduction grammars (see~\cite{Bendkowski16}) showing that for each positive $n$, the set of $\S \K$-combinators reducing in $n$ normal-order reduction steps has positive asymptotic density in the set of all combinators. Our approach is constructive, allowing us to systematically find new asymptotically significant fractions of normalising combinators. We show that the density of normalising combinators cannot be less than $34\%$, improving the previously best lower bound of approximately $3\%$ (see~\cite{Bendkowski2015}). Finally, we present some super-computer experimental results, conjecturing that the density of normalising combinators is close to $85\%$.
\end{abstract}
\maketitle
\vspace{-8mm}
\section{Introduction}\label{sec:introduction}
Quantitative investigations in logic, where properties and behaviour of typical objects are studied, form a rich and well-established branch of mathematics on the border of logic, combinatorics and theoretical computer science. From a combinatorial point of view, logical formulae are objects with finite representations and, sometimes several, naturally associated notions of size. In cases when the assumed size notion imposes finitely many objects of any size, we can consider uniformly random formulae within such a family of objects. Analysing the asymptotic behaviour of the probability that a uniformly random object of size $n$ satisfies a certain property~$P$ as $n$ tends to infinity, yields the notion of asymptotic density of $P$ and, in consequence, leads to investigations how certain natural properties, such as satisfiability etc., behave in the case of typical formulae.

There is a long history of using this kind of asymptotic approach applied to logic (see, e.g.~\cite{mtz00, kos-zaionc03, FGGZ07, GK2012}) and recently its computational aspects. In~\cite{dgkrtz}, David et al. initiated quantitative investigations in lambda calculus and combinatory logic. Considering the set of closed $\lambda$-terms in a canonical representation where variables do not contribute to the term size, David et al. showed that typical $\lambda$-terms are strongly normalising whereas in the case of $\S \K$-combinators the situation is precisely opposite -- asymptotically almost no $\S \K$-combinator is strongly normalising. Somewhat contrary to their result, Bendkowski et al. in~\cite{Bendkowski16} considered a different representation of $\lambda$-terms with de~Bruijn indices, showing that similarly to the case of $\S \K$-combinators, asymptotically almost no $\lambda$-term is strongly normalising. 

Despite many efforts, the associated counting problem for closed $\lambda$-terms is still one of the remaining major open problems. Throughout the years, different variants of lambda calculus have been considered. In~\cite{BGGJ2013}, Bodini et al. studied the enumeration of BCI $\lambda$-terms. John Tromp in~\cite{T2014}, as well as Grygiel and Lescanne in~\cite{JFP:10090684}, considered the counting problem in the so called binary lambda calculus. Recently, in~\cite{gittenberger_et_al:LIPIcs:2016:5741} Gittenberger and Gołębiewski considered $\lambda$-terms in the de~Bruijn notation with various size notions, giving tight lower and upper asymptotic bounds on the number of closed $\lambda$-terms. Due to the lack of bound variables, combinatory logic circumvents the intrinsic issues present in lambda calculus, serving as a minimalistic formalism capable of expressing all lambda-definable functions. Its simple syntax yields a natural representation using plane binary trees with labelled leaves, which greatly facilitates the analysis of asymptotic properties.
 
The paper is organised as follows. In~\autoref{sec:combinatory-logic} we recall preliminary definitions and notation regarding combinatory logic. In the following sections~\ref{sec:combinatorial-classes} and~\ref{sec:generating-functions-and-analytic-tools}, we state basic notions of combinatorial classes and generating functions, listing main tools used to study asymptotic properties of combinators, in particular, the analytic methods of singularity analysis. In~\autoref{sec:basis-independent-results} we study the class of plane binary trees with labelled leaves, deriving basis-independent combinatory logic results as easy corollaries. In~\autoref{sec:sk-combinators} we focus on the $\S \K$-basis proving, inter alia, that for each positive $n$, the set of combinators reducing in $n$ normal-order reduction steps has positive asymptotic density. Finally, in~\autoref{sec:experimental-results} we discuss some experimental results.

\section{Combinatory logic}\label{sec:combinatory-logic}
Let $\B$ be a finite \emph{basis} of primitive combinators. The set $\class{\B}$ of \emph{$\B$-combinators} is defined inductively as follows. If $\comb{X} \in \B$, then $\comb{X} \in \class{\B}$. If $N, M \in \class{\B}$, then $(N M) \in \class{\B}$. In the latter case, we say that $(N M)$ is an \emph{application} of $N$ to $M$. If the underlying basis is clear from the context, we simply write \emph{combinators} instead of $\B$-combinators. Following standard notational conventions (see, e.g.~\cite{BAR84}), we omit outermost parentheses and drop parentheses from left-associated combinators, e.g.~instead of writing $((M N) (P Q))$ we write $M N (P Q)$.

Each primitive combinator $\comb{X} \in \B$ contributes a \emph{reduction rule} of the form
\begin{equation*}
\comb{X} N_1 \ldots N_m \to M
\end{equation*}
where $m \geq 1$, $M \in \class{\set{N_1, \ldots, N_m}}$ and $N_1 \ldots N_m$ are arbitrary combinators. In other words, $\comb{X} N_1 \ldots N_m$ \emph{reduces} to some determined combinator $M$ built from $N_1,\ldots,N_m$ and term application.
The reduction relation $\to$ is then extended onto all combinators such that if $P \to Q$, then $PR \to QR$ and $RP \to RQ$ for each combinator $R$.

Let $P$ be a combinator. If there exists no combinator $Q$ such that $P \to Q$, then $P$ is said to be in \emph{normal form}. If there exists a finite sequence of combinators $P_0,P_1,\ldots,P_k$ such that $P = P_0 \to P_1 \to \ldots \to P_k$ and $P_k$ is in normal form, then $P$ is \emph{weakly normalising}, or simply \emph{normalising}. If there does not exist an infinite sequence of combinators $P_0,P_1,\ldots$ such that $P=P_0 \to P_1 \to \ldots$, then we say that $P$ is \emph{strongly normalising}. Naturally, strong normalisation implies weak normalisation.

A major part of combinatory logic is devoted to its simple type theory corresponding to the implicational fragment of minimal logic (see, e.g.~\cite{BAR84}) and  hence also to simply typed $\lambda$-calculus (see \cite{Hindley}). In its most common basis $\B = \set{\comb{S},\comb{K}}$, the type-assignment deduction system $TA_{\comb{S} \comb{K}}$ is given by the following two axiom schemes for $\comb{S}$ and $\comb{K}$, with a single \emph{modus ponens} inference rule.
\begin{equation}\tag{Axiom $\comb{S}$}
\begin{mathprooftree}
\AxiomC{}
\UnaryInf$\comb{S} : (\alpha \to \beta \to \gamma) \to (\alpha \to \beta) \to \alpha \to \gamma \fCenter$
\end{mathprooftree}
\end{equation}
\begin{equation}\tag{Axiom $\comb{K}$}
\begin{mathprooftree}
\AxiomC{}
\UnaryInf$\comb{K} : \alpha \to \beta \to \alpha\fCenter$
\end{mathprooftree}
\end{equation}
\begin{equation}\tag{Modus ponens}
\begin{mathprooftree}
\AxiomC{$N : \alpha \to \beta$}
\AxiomC{$M : \alpha$}
\BinaryInf$(N M) : \beta\fCenter$
\end{mathprooftree}
\end{equation}
The primitive combinators $\comb{S}$ and $\comb{K}$ can be assigned any types fitting to their axiom schemes. On the other hand, an application $(N M)$ can be assigned a type $\beta$, denoted $(N M) : \beta$, if and only if $N : \alpha \to \beta$, whereas $M : \alpha$. If $P : \alpha$ for some type $\alpha$, we say that $P$ is \emph{typeable}. Naturally, not every combinator is typeable, e.g.~$\omega := \comb{S} \comb{I} \comb{I}$, where $\comb{I} := \comb{S} \comb{K} \comb{K}$. Though not all strongly normalising combinators are typeable, the converse implication holds, i.e.~each typeable combinator is strongly normalising.

In this paper, we focus mostly on bases which are capable of expressing all computable functions. A sufficient and necessary condition for a basis $\B$ to be \emph{Turing-complete} is to span $\B$-combinators extensionally-equivalent to $\comb{S}$ and $\comb{K}$. In such a case, we make the natural assumption that $\B$ defines a set of axiom schemes, one for each primitive combinator, in such a way that when enriched with the modus ponens inference rule, $TA_{\B}$ constitutes a sound typing system -- if $N$ is extensionally-equivalent to either $\comb{S}$ or $\comb{K}$, then the types of $N$ in $TA_{\B}$ are the same as the types of, respectively, $\comb{S}$ or $\comb{K}$ in $TA_{\comb{S} \comb{K}}$. Throughout the paper, we use the over line notation $\overline{\comb{X}}$ to denote arbitrary terms extensionally-equivalent to $\comb{X}$.

We refer the reader to~\cite{curry-feys} or~\cite{BAR84} for a more detailed exposition of combinatory logic.

\section{Combinatorial classes}\label{sec:combinatorial-classes}
Let $B$ be a countable set of objects with an associated \emph{size} function $f \colon B \to \nats$. If for each $n \in \nats$ the set of $B$'s objects of size $n$ is finite, then $B$ together with $f$ forms a \emph{combinatorial class} (see, e.g.~\cite{FlajoletSedgewick2009}). In such a case, we can associate a \emph{counting sequence} $\set{b_n}_{n \in \nats}$ of natural numbers $b_n$ capturing the number of objects in $B$ of size $n$. Naturally, if $A \subseteq B$, then $A$ is a combinatorial class as well with $a_n \leq b_n$ for each $n \in \nats$. Assuming that $b_n > 0$ for each $n\in \nats$, we can then define the \emph{asymptotic density} $\density{A}{B}$ of $A$ in $B$ as

\begin{equation*}
\density{A}{B} = \lim_{n\to\infty} \frac{a_n}{b_n}.
\end{equation*}
Note that if it exists, we can interpret $\density{A}{B}$ as the asymptotic probability of finding an object of $A$ in the class of objects $B$. In other words, the likelihood that $A$ represents `typical' objects in $B$.
Unfortunately, sometimes we do not know whether the asymptotic density of $A$ in $B$ exists, however we can use the \emph{lower} and \emph{upper limits} defined as

\begin{equation*}
\densityminus{A}{B} = \liminf_{n\to\infty} \frac{a_n}{b_n} \quad \text{and}  \quad \densityplus{A}{B} = \limsup_{n\to\infty} \frac{a_n}{b_n}.
\end{equation*}
As $0 \leq \frac{a_n}{b_n} \leq 1$, these two numbers are well defined for any set $A$, even when the limiting ratio $\density{A}{B}$ is not known to exist.
Henceforth, given a combinatorial class $A$, we use $a_n$ to denote the number of objects in $A$ of size $n$.

\section{Generating functions and analytic tools}\label{sec:generating-functions-and-analytic-tools}
Let $A$ be a combinatorial class. The formal power series $A(z) = \sum_{n \geq 0} a_n z^n$ with $A$'s counting sequence $\set{a_n}_{n\in \nats}$ as coefficients, is called the \emph{ordinary generating function} of $A$. Using the powerful theory of \emph{Analytic Combinatorics} developed by Flajolet and Sedgewick~\cite{FlajoletSedgewick2009}, many questions concerning the asymptotic behaviour of $\set{a_n}_{n \in \nats}$ can be efficiently resolved by analysing the behaviour of $A(z)$ viewed as an analytic function in some neighbourhood around the complex plane origin. This is the approach we take to study the asymptotic fractions of interesting combinatory logic terms.

Throughout the paper we use $A(z)$ to denote the ordinary generating function associated with the combinatorial class $A$. We write $[z^n]A(z)$ to denote the coefficient standing by $z^n$ in the Taylor series expansion of $A(z)$ around $z = 0$. We say that two sequences $\set{a_n}_{n \in \nats}$ and $\set{b_n}_{n \in \nats}$ are \emph{asymptotically equivalent} if $\lim_{n \to \infty} \frac{a_n}{b_n} = 1$. In such a case we write $a_n \sim b_n$.

\subsection{Main tools}\label{subsec:main-tools}
In our endeavour to study the asymptotic behaviour of `typical' classes of combinatory logic terms, we use the method of \emph{singularity analysis}~\cite{FlajoletSedgewick2009}. Starting with a particular class $A$ of combinators, we find its corresponding generating function $A(z)$. The location of $A(z)$'s dominant singularities determines the exponential growth rate of $\set{a_n}_{n \in \nats}$ as dictated by the following theorem.

\begin{theorem}[Exponential Growth Formula, see {\cite[Theorem IV.7]{FlajoletSedgewick2009}}]\label{thm:exponential-growth-formula}
			If $A(z)$ is analytic at $0$ and $R$ is the modulus of a singularity nearest to the origin in the sense that
			\[ R = \sup \{ r \geq 0 ~:~ A(z) \text{ is analytic in } |z| < r \} ,\]
			then the coefficient $a_n = [z^n] A(z)$ satisfies
			\[ a_n = R^{-n} \theta(n) \quad \text{with} \quad \limsup |\theta(n)|^{\frac{1}{n}} = 1 .\]
		\end{theorem}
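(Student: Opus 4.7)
The plan is to reduce the statement to the classical Cauchy--Hadamard formula by identifying $R$ with the radius of convergence of the Taylor series of $A$ at the origin, and then rewriting the resulting estimate in the desired form.

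First I would argue that $R$, as defined in the statement, coincides with the radius of convergence $\rho$ of the power series $\sum_{n\geq 0} a_n z^n$. One direction is immediate: inside its disk of convergence $|z|<\rho$, a power series defines an analytic function, so $R\geq \rho$. For the reverse inequality, if $A(z)$ admits an analytic extension on some disk $|z|<r$, then by Cauchy's integral formula the Taylor coefficients at $0$ satisfy $|a_n|\leq M(r')/(r')^{n}$ for every $r'<r$, where $M(r')=\max_{|z|=r'}|A(z)|$. This forces the series $\sum a_n z^n$ to converge for $|z|<r$, hence $\rho\geq R$. Taking the supremum over all admissible $r$ yields $\rho=R$.

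Next I would invoke the Cauchy--Hadamard theorem, which states that the radius of convergence of $\sum a_n z^n$ satisfies
\begin{equation*}
\frac{1}{\rho} \;=\; \limsup_{n\to\infty} |a_n|^{1/n}.
\end{equation*}
Substituting $\rho=R$, this rewrites as $\limsup_{n\to\infty} |a_n|^{1/n}=1/R$. Setting $\theta(n):=a_n R^n$, one has $a_n=R^{-n}\theta(n)$ by definition, and the limsup computation gives
\begin{equation*}
\limsup_{n\to\infty} |\theta(n)|^{1/n} \;=\; R\cdot \limsup_{n\to\infty} |a_n|^{1/n} \;=\; R\cdot \frac{1}{R} \;=\; 1,
\end{equation*}
which is the desired conclusion.

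There is no genuinely deep obstacle in this argument; the content lies almost entirely in two ingredients that I would treat as standard: the Cauchy--Hadamard formula, and the fact that analyticity of $A$ on the open disk $|z|<r$ is equivalent to convergence of its Taylor series at $0$ on that disk. The only subtle point worth flagging is that the hypothesis explicitly requires the \emph{modulus} of a nearest singularity, i.e.\ that at least one singularity lies on the circle $|z|=R$ so that $R<\infty$; this is automatic once $\rho$ is finite, because a classical theorem of complex analysis guarantees that the circle of convergence of a power series with finite radius always contains a singular point. If one wished to include the entire case $R=\infty$, the same formula holds with the convention that $R^{-n}=0$ and the statement becomes $\limsup |a_n|^{1/n}=0$, but the theorem as stated implicitly assumes a finite nearest singularity and so this case need not be addressed.
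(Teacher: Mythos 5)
Your argument is correct and complete: identifying $R$ with the Cauchy--Hadamard radius of convergence via Cauchy's coefficient estimates, and then unwinding $\limsup |a_n|^{1/n} = 1/R$ into the stated form, is exactly the standard proof, and your remark about a singular point necessarily lying on the circle of convergence when $R<\infty$ correctly addresses the only subtle hypothesis. The paper itself does not prove this statement but simply cites it as Theorem~IV.7 of Flajolet and Sedgewick, where essentially the same argument appears.
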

In the case of analytic functions derived from combinatorial classes, the location of dominant singularities is significantly simplified as it suffices to look for singularities on the real line.

\begin{theorem}[Pringsheim, see {\cite[Theorem~IV.6]{FlajoletSedgewick2009}}]
		\label{th:pringsheim}
			If $A(z)$ is representable at the origin by a series expansion that has non-negative
			 coefficients and radius of convergence $R$, then the point $z = R$ is a singularity of $A(z)$.
		\end{theorem}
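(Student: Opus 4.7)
The plan is to argue by contradiction. I would suppose that $z = R$ is a regular point of $A(z)$, meaning $A$ admits an analytic continuation to an open disk centred at $R$. Combining this with the analyticity of $A$ in $|z| < R$, I would pick a real base point $z_0 \in (0, R)$ close enough to $R$ so that $A$ is analytic on an open disk $D(z_0, \rho)$ with $\rho > R - z_0$. Such a $z_0$ exists because the domain of analyticity is open and, by assumption, contains $R$.

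Next I would consider the Taylor expansion of $A$ around $z_0$:
\[ A(z) = \sum_{n \geq 0} \frac{A^{(n)}(z_0)}{n!} (z - z_0)^n, \qquad |z - z_0| < \rho. \]
Since $z_0 < R$, termwise differentiation of the original series inside its radius of convergence yields
\[ \frac{A^{(n)}(z_0)}{n!} = \sum_{k \geq n} \binom{k}{n} a_k\, z_0^{k-n}, \]
which, crucially, is a sum of non-negative terms. The decisive move is then to pick a real $r$ with $R < r < z_0 + \rho$ and evaluate the Taylor expansion at $r$. Exchanging the order of summation, justified by Tonelli's theorem applied to the non-negative double sum, I would recover
\[ A(r) = \sum_{k \geq 0} a_k \sum_{n=0}^{k} \binom{k}{n} (r - z_0)^n z_0^{k-n} = \sum_{k \geq 0} a_k\, r^k, \]
so the series $\sum_k a_k\, r^k$ converges at some $r > R$, contradicting the definition of $R$ as the radius of convergence.

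The only nontrivial ingredient is the interchange of summation, which is where the non-negativity hypothesis is actively used; without it the whole argument collapses, as witnessed by $\sum_{n \geq 0} (-1)^n z^n$, whose sum $1/(1 + z)$ is analytic at $z = 1$ despite the series having radius of convergence $1$. I expect this absolute-convergence rearrangement to be the main technical step, though a routine one. Pringsheim's theorem thus boils down to the observation that for series with non-negative coefficients, the real analytic continuation to the right controls the growth of the coefficients in exactly the same way as the series itself does.
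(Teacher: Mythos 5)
The paper does not supply a proof of this statement; it is quoted verbatim (as Theorem~IV.6) from Flajolet and Sedgewick~\cite{FlajoletSedgewick2009}, so there is nothing internal to compare against. Your argument is the standard proof of Pringsheim's theorem and is correct: re-expanding around a real $z_0$ close to $R$, using non-negativity of all terms to justify the Tonelli-type interchange, and deducing convergence of $\sum_k a_k r^k$ for some $r > R$ is exactly the classical route, and it is essentially the one carried out in the cited reference. One small point worth making explicit is how the disk $D(z_0,\rho)$ with $\rho > R - z_0$ is obtained: since by the regularity assumption $A$ extends analytically to some $D(R,\varepsilon)$, taking $z_0 = R - \varepsilon/3$ and $\rho = 2\varepsilon/3$ already gives $D(z_0,\rho)\subset D(R,\varepsilon)$ with $\rho > R - z_0$, so one never needs to reason about the union with $|z|<R$. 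Your closing example $\sum_n (-1)^n z^n$ correctly pinpoints where the non-negativity hypothesis is load-bearing.
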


The sub-exponential factors determining the asymptotic growth rate of $\set{a_n}_{n \in \nats}$ can be then further established using, in our case, the standard function scale for algebraic singularities of square-root type. 

\begin{theorem}[Standard function scale, see~{\cite[Theorem~VI.1]{FlajoletSedgewick2009}}]\label{th:standard-func-scale}
Let $\alpha \in \C \setminus \Z_{\leq 0}$. Then $f(z) = {(1 - z)}^{-\alpha}$ admits for large $n$ a complete asymptotic expansion in form of
\begin{equation*}
[z^n]f(z) = \frac{n^{\alpha-1}}{\Gamma(\alpha)} \left( 1 + \frac{\alpha(\alpha-1)}{2n} + \frac{\alpha(\alpha-1)(\alpha-2)(3\alpha-1)}{24n^2} + O(\frac{1}{n^3}) \right)
\end{equation*}
where $\Gamma$ is the Euler Gamma function.
\end{theorem}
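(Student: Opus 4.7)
The plan is to identify the coefficient $[z^n](1-z)^{-\alpha}$ explicitly via the generalised binomial theorem and then reduce the statement to an asymptotic expansion of a ratio of Gamma functions.

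First, I would expand
\begin{equation*}
(1-z)^{-\alpha} = \sum_{n \geq 0} \frac{\alpha(\alpha+1)\cdots(\alpha+n-1)}{n!}\, z^n,
\end{equation*}
which gives $[z^n](1-z)^{-\alpha} = \Gamma(n+\alpha)/\bigl(\Gamma(n+1)\,\Gamma(\alpha)\bigr)$. The hypothesis $\alpha \in \C \setminus \Z_{\leq 0}$ is exactly what guarantees that $\Gamma(\alpha)$ is finite and non-zero, so this closed form is meaningful. The problem therefore collapses to controlling $\Gamma(n+\alpha)/\Gamma(n+1)$ as $n \to \infty$.

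Second, the heart of the proof is to produce a full asymptotic series for this ratio. The canonical tool is Stirling's series,
\begin{equation*}
\log \Gamma(z) = \left(z - \tfrac{1}{2}\right)\log z - z + \tfrac{1}{2}\log(2\pi) + \sum_{k \geq 1}\frac{B_{2k}}{2k(2k-1) z^{2k-1}},
\end{equation*}
valid as $|z| \to \infty$ off the negative real axis. Applying it to $\Gamma(n+\alpha)$ and $\Gamma(n+1)$, subtracting, and then expanding $\log(1 + \alpha/n)$ in powers of $1/n$, I would extract
\begin{equation*}
\frac{\Gamma(n+\alpha)}{\Gamma(n+1)} = n^{\alpha-1}\left(1 + \frac{\alpha(\alpha-1)}{2n} + \frac{\alpha(\alpha-1)(\alpha-2)(3\alpha-1)}{24 n^2} + O\!\left(\frac{1}{n^3}\right)\right),
\end{equation*}
and then divide by $\Gamma(\alpha)$ to recover the claimed formula. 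Alternatively, and more in the spirit of \cite{FlajoletSedgewick2009}, one can avoid Stirling altogether using Hankel's representation $1/\Gamma(\alpha) = (2\pi i)^{-1} \int_{\mathcal{H}} e^{t} t^{-\alpha}\, dt$. Applying Cauchy's integral $[z^n]f(z) = (2\pi i)^{-1} \oint f(z)\, z^{-n-1}\, dz$ to $f(z) = (1-z)^{-\alpha}$ along a contour that loops around $z = 1$, substituting $z = 1 + t/n$, and expanding $(1 + t/n)^{-n-1}$ term-by-term in $1/n$, one obtains the Hankel integral in the leading order and the announced correction terms from the successive Taylor coefficients. This second approach has the virtue that it generalises painlessly to the broader standard function scale $(1-z)^{-\alpha}\log^\beta(1/(1-z))$ and is the route Flajolet and Sedgewick actually take.

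The main obstacle is purely computational rather than conceptual: correctly tracking enough terms of Stirling's series, or equivalently enough Taylor coefficients along the Hankel contour, to pin down the second-order constant $\alpha(\alpha-1)(\alpha-2)(3\alpha-1)/24$. The $1/n^2$ contributions coming from $\log\Gamma(n+\alpha)$ and from $\log\Gamma(n+1)$ involve competing Bernoulli-number and polynomial-in-$\alpha$ terms that must be combined without slip. A secondary technical point is justifying that Stirling's expansion (or the Hankel-contour convergence) is uniform for $\alpha$ in compact subsets of $\C \setminus \Z_{\leq 0}$, which is what allows one to treat $\alpha$ as an arbitrary fixed complex parameter in the error term $O(1/n^3)$.
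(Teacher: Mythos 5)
The paper does not prove this theorem; it simply imports it from Flajolet and Sedgewick, Theorem VI.1, where it is established via the Hankel-contour argument you describe in your second approach. Your sketch is correct: the binomial reduction to $\Gamma(n+\alpha)/\bigl(\Gamma(n+1)\Gamma(\alpha)\bigr)$ is exactly right (and correctly uses the hypothesis $\alpha \notin \Z_{\leq 0}$ to make $1/\Gamma(\alpha)$ well-defined), and both the Stirling route and the Cauchy--Hankel route are standard ways to extract the full asymptotic series, with the latter matching the cited source. One small caution on the Stirling route: you must expand both $\log\Gamma(n+\alpha)$ and $\log\Gamma(n+1)$ around the \emph{same} large variable and carefully re-expand $(n+\alpha)^{n+\alpha-1/2}$ as $n^{n+\alpha-1/2}(1+\alpha/n)^{n+\alpha-1/2}$ before subtracting; sign errors in the competing $B_2/(12z)$ terms are the usual source of a wrong second-order constant. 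The uniformity remark at the end is also the right thing to flag for a rigorous $O(1/n^3)$ with $\alpha$ a complex parameter.
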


\begin{theorem}[Newton-Puiseux, see~{\cite[Theorem~VII.7]{FlajoletSedgewick2009}}]
Let $f(z)$ be a branch of an algebraic function $P(z, f(z)) = 0$. Then in a circular neighbourhood of a singularity $\zeta$ slit along a ray emanating from $\zeta$, $f(z)$ admits a fractional series expansion that is locally convergent and of the form
\begin{equation*}
f(z) = \sum_{k \geq k_0} c_k {\left( z - \zeta \right)}^{\nicefrac{k}{\kappa}}
\end{equation*}
where $k_0 \in \Z$ and $\kappa \geq 1$.
\end{theorem}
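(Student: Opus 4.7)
The plan is a constructive proof via the \emph{Newton polygon} method, augmented by an analytic convergence argument once the formal expansion has been built. First, by translating $z \mapsto z - \zeta$, reduce to the case $\zeta = 0$. Pass to an irreducible factor of $P$ in $\C\{z\}[w]$ (polynomials in $w$ over convergent power series in $z$) corresponding to the branch $f$, and apply the Weierstrass preparation theorem to replace $P$ by a distinguished polynomial
\begin{equation*}
P(z,w) = w^d + a_{d-1}(z)\, w^{d-1} + \cdots + a_0(z), \qquad a_i(0) = 0.
\end{equation*}
The case $P(0,w) \neq 0$ is already handled by the ordinary implicit function theorem, which produces the trivial ramification $\kappa = 1$.

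Next, extract the leading term of $f$ from the Newton polygon of $P$, that is, the lower boundary of the convex hull of the support $\{(i,j) : [z^i w^j]\,P \neq 0\}$. Plugging an ansatz $f(z) = c\, z^{\alpha} + o(z^{\alpha})$ into $P(z,f(z)) = 0$ and balancing the smallest powers of $z$ forces $\alpha = k_0/\kappa$, where $-k_0/\kappa$ is the slope of some edge of the polygon (taken in lowest terms), while $c$ must be a non-zero root of the \emph{characteristic polynomial} attached to that edge. This single step simultaneously pins down the ramification index $\kappa \geq 1$ and the initial exponent $k_0 \in \Z$ appearing in the statement.

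With the leading monomial in hand, iterate: substitute $w = c\, z^{k_0/\kappa} + z^{k_0/\kappa}\, w_1$ into $P(z,w) = 0$ to obtain a new algebraic equation in $w_1$ whose Newton polygon has strictly reduced complexity (measured, say, by the $w_1$-valuation of its constant term). Induction on this complexity yields, coefficient by coefficient, the formal fractional series $\sum_{k \geq k_0} c_k\, (z-\zeta)^{k/\kappa}$ predicted by the theorem.

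The hard part will be showing that this formal Puiseux series actually converges in a punctured neighbourhood of $\zeta$. My plan is to perform the branched change of variable $z = u^{\kappa}$, which converts the formal expansion into an ordinary power series in $u$ satisfying the polynomial equation $P(u^{\kappa}, w) = 0$. Irreducibility of $P$ lets one isolate a simple analytic branch of this modified equation, so that a direct application of the implicit function theorem — or, equivalently, a majorant-series argument in the spirit of Cauchy's method of dominant series — provides a positive radius of convergence in $u$. Reverting via $u = (z-\zeta)^{1/\kappa}$ and fixing one single-valued determination of the $\kappa$-th root then forces the analyticity domain to be a disk slit along a ray emanating from $\zeta$, which is precisely the geometric condition stated in the theorem.
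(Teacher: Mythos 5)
The paper does not prove this theorem at all: it is stated as an imported result with an explicit citation to Flajolet and Sedgewick, Theorem~VII.7, and is used later as a black box inside the \emph{Algebraic singularity analysis} corollary and in the asymptotic analysis of the reduction-grammar generating functions. So there is no in-paper proof to compare against; what you have written is a free-standing reconstruction.

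As a reconstruction, your sketch follows the classical Newton--Puiseux argument (translation to $\zeta = 0$, Weierstrass preparation, Newton polygon to extract the leading exponent and coefficient, iteration, then convergence via the ramified substitution $z = u^{\kappa}$) and is correct in outline. Two points deserve tightening if you want this to be a genuine proof rather than a plan. First, the induction measure you propose --- ``the $w_1$-valuation of the constant term'' --- is not the standard one and does not obviously decrease; the usual well-founded quantity is the multiplicity of the chosen root $c$ of the edge's characteristic polynomial (equivalently, the $w$-degree of the new distinguished polynomial after the substitution), and one must argue separately that once this multiplicity reaches $1$ the implicit function theorem finishes the job, and that it cannot stay $>1$ forever without the ramification index stabilising. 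Second, the convergence step cannot literally invoke the implicit function theorem for $P(u^{\kappa},w) = 0$ unless the relevant root is simple at $u = 0$; in general you need the Cauchy majorant argument you mention as the fallback, and that should be the primary route, not the ``or, equivalently'' afterthought. Also, your remark that ``$P(0,w) \neq 0$'' implies $\kappa = 1$ conflates two conditions: the implicit function theorem requires $\partial P/\partial w \neq 0$ at the point, not merely non-vanishing of $P(0,\cdot)$. None of these are fatal --- they are the standard fussy points of the Newton polygon proof --- but they are exactly where a careless write-up goes wrong.
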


Finally, combining the scaling rule for Taylor expansions and the Newton-Puiseux expansion of algebraic functions with unique dominating singularities, we obtain the following corollary theorem.

\begin{theorem}[Algebraic singularity analysis]\label{th:singularity-analysis}
Let $f(z) = {(1 - \zeta^{-1} z)}^{\nicefrac{1}{2}} g(z) + h(z)$ be an algebraic function, analytic at $0$, having a unique dominant singularity $z = \zeta$. Assume that $g(z)$ and $h(z)$ are analytic in the disk $|z| < \zeta + \eta$ for some $\eta > 0$. Then the coefficient $[z^n]f(z)$ satisfies the following asymptotic approximation
\begin{equation*}
[z^n]f(z) \sim \zeta^{-n} \frac{\overline{C} n^{-\nicefrac{3}{2}}}{\Gamma(-\frac{1}{2})}
\end{equation*}
where $\overline{C}$ is the coefficient standing by $\sqrt{1 - \zeta^{-1} z}$ in the Newton-Puiseux expansion of $f(z)$, i.e.~$g(\zeta)$.
\end{theorem}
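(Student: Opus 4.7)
The plan is to decompose $f$ into an analytic piece plus a singular piece, showing that only the singular piece contributes to the dominant asymptotic. The singular piece is then handled by combining the standard function scale (Theorem~\ref{th:standard-func-scale}) with the scaling rule $[z^n] F(\zeta^{-1} z) = \zeta^{-n} [z^n] F(z)$ for Taylor coefficients.

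First I would observe that since $h(z)$ is analytic in $|z| < \zeta + \eta$, Cauchy's coefficient bound gives $[z^n] h(z) = O(r^{-n})$ for every $\zeta < r < \zeta + \eta$, hence $[z^n] h(z) = o(\zeta^{-n} n^{-3/2})$, and the contribution of $h$ is absorbed into the error term. It remains to analyse the coefficients of $(1 - \zeta^{-1}z)^{1/2} g(z)$.

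Next I would expand $g$ about $\zeta$ as $g(z) = g(\zeta) + (z - \zeta)\tilde{g}(z)$, where $\tilde{g}(z) = (g(z) - g(\zeta))/(z-\zeta)$ is analytic in $|z| < \zeta + \eta$ (the putative singularity at $\zeta$ is removable). Substituting yields
\[
(1 - \zeta^{-1}z)^{1/2} g(z) = g(\zeta)\,(1-\zeta^{-1}z)^{1/2} - \zeta\, \tilde{g}(z)\,(1-\zeta^{-1}z)^{3/2}.
\]
Applying Theorem~\ref{th:standard-func-scale} with $\alpha = -\nicefrac{1}{2}$ together with the scaling rule gives
\[
[z^n]\, g(\zeta)(1-\zeta^{-1}z)^{1/2} \sim g(\zeta)\, \zeta^{-n}\, \frac{n^{-\nicefrac{3}{2}}}{\Gamma(-\nicefrac{1}{2})},
\]
which is the claimed formula once one identifies $\overline{C} = g(\zeta)$. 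That identification is immediate: in any local Newton-Puiseux expansion of $f$ about $\zeta$, the analytic pieces $h(z)$ and $(z-\zeta)\tilde{g}(z)\cdot\sqrt{1-\zeta^{-1}z}$ contribute either no fractional powers or powers $\geq \nicefrac{3}{2}$, so the coefficient of $\sqrt{1-\zeta^{-1}z}$ is exactly $g(\zeta)$.

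The main obstacle, and the only step that is not purely formal, is showing that the correction term $\tilde{g}(z)(1-\zeta^{-1}z)^{3/2}$ has coefficients of order $O(\zeta^{-n} n^{-\nicefrac{5}{2}})$, so that it is genuinely lower-order. For this I would either invoke a transfer theorem of Flajolet--Sedgewick type (where multiplying a standard singular scale by a factor analytic on a suitable slit neighbourhood of $\zeta$ preserves the sub-exponential shape), or proceed directly by evaluating the Cauchy coefficient integral on a Hankel-type keyhole contour that hugs $\zeta$ on a scale $\nicefrac{1}{n}$ and closes along an arc of radius slightly larger than $\zeta$; on that arc $\tilde{g}$ is uniformly bounded, and the standard integral estimates for $(1 - \zeta^{-1}z)^{\nicefrac{3}{2}}$ then apply verbatim.
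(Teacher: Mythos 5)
The paper's ``proof'' of this theorem is a single citation to Flajolet--Sedgewick, Theorem VII.8, whereas your argument is a self-contained derivation from tools already stated in the paper. You split off the analytic part $h$ (whose coefficients are exponentially smaller by Cauchy's bound), expand $g(z)=g(\zeta)+(z-\zeta)\tilde g(z)$ to isolate the leading singular term $g(\zeta)(1-\zeta^{-1}z)^{\nicefrac{1}{2}}$, apply the standard function scale (\autoref{th:standard-func-scale}) with $\alpha=-\nicefrac{1}{2}$ together with the scaling rule, and treat the remainder $-\zeta\,\tilde g(z)(1-\zeta^{-1}z)^{\nicefrac{3}{2}}$ by a transfer-type estimate. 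This is correct, and the step you rightly flag as the only non-formal one --- bounding the coefficients of that remainder --- is where the substance lies. Both of your proposed routes are valid: since $g$ (hence $\tilde g$) is analytic in the full disk $|z|<\zeta+\eta$, the remainder is analytic and $O\big((1-\zeta^{-1}z)^{\nicefrac{3}{2}}\big)$ on a $\Delta$-domain around $\zeta$, so the Flajolet--Sedgewick transfer theorem (Theorem VI.3 of~\cite{FlajoletSedgewick2009}) gives $[z^n]\,\tilde g(z)(1-\zeta^{-1}z)^{\nicefrac{3}{2}}=O(\zeta^{-n}n^{-\nicefrac{5}{2}})$, which is absorbed into the error; the Hankel-contour computation proves the same bound directly. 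What your route buys is a proof that stays within the lemmas the paper already quotes rather than appealing to the packaged algebraic-function machinery of FS~VII.8, at the cost of unpacking the transfer step explicitly; what the paper's citation buys is brevity and, implicitly, the more general statement that covers arbitrary fractional exponents in the Newton--Puiseux expansion rather than just the square-root case.
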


\begin{proof}
See~\cite{FlajoletSedgewick2009}, Theorem VII.8.
\end{proof}

In order to simplify the reasoning about the type and location of singularities of generating functions given without explicit closed-form solutions, we use the following technical lemma guaranteeing certain natural closure properties of analytic functions with a single square-root type dominating singularity.

\begin{lemma}\label{lem:sqrt-commutative-ring}
Let $\Omega$ be the open disk $|z| < \zeta + \eta$ for some $0 < \zeta < 1$ and $\eta > 0$. Let $F$ denote the set of functions $f \colon (0,\zeta) \to \C$ in form of $f(z) = \sqrt{1-\zeta^{-1} z}\, P(z) + Q(z)$ for arbitrary $P(z)$ and $Q(z)$ analytic in $\Omega \setminus \set{0}$. Then $F$ with natural function addition and multiplication forms a commutative ring.
\end{lemma}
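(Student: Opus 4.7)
The plan is to exhibit $F$ as a subring of the commutative ring of complex-valued functions on $(0,\zeta)$ under pointwise addition and multiplication. Since associativity, commutativity, and distributivity are inherited from that ambient ring, it suffices to verify that $F$ contains $0$ and $1$, and is closed under negation, addition, and multiplication. The first three are essentially trivial: writing $0 = \sqrt{1-\zeta^{-1}z}\cdot 0 + 0$ and $1 = \sqrt{1-\zeta^{-1}z}\cdot 0 + 1$ shows the identities are in $F$, and negating $(P,Q)$ to $(-P,-Q)$ handles inverses.

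For closure under addition, given $f_i(z) = \sqrt{1-\zeta^{-1}z}\, P_i(z) + Q_i(z)$ for $i=1,2$ with each $P_i, Q_i$ analytic in $\Omega\setminus\{0\}$, I would simply observe
\begin{equation*}
f_1(z) + f_2(z) = \sqrt{1-\zeta^{-1}z}\,(P_1(z)+P_2(z)) + (Q_1(z)+Q_2(z)),
\end{equation*}
and the sums $P_1+P_2$ and $Q_1+Q_2$ remain analytic on $\Omega\setminus\{0\}$, placing $f_1+f_2$ back in $F$.

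The main (still routine) step is closure under multiplication. Expanding the product one gets
\begin{equation*}
f_1(z) f_2(z) = (1-\zeta^{-1}z)\, P_1(z) P_2(z) + \sqrt{1-\zeta^{-1}z}\,(P_1(z) Q_2(z) + P_2(z) Q_1(z)) + Q_1(z) Q_2(z),
\end{equation*}
where the cross term with $\sqrt{1-\zeta^{-1}z}^{\,2}$ collapsed to the polynomial factor $(1-\zeta^{-1}z)$. Setting $P(z) := P_1(z) Q_2(z) + P_2(z) Q_1(z)$ and $Q(z) := (1-\zeta^{-1}z)\, P_1(z) P_2(z) + Q_1(z) Q_2(z)$, both $P$ and $Q$ are analytic in $\Omega\setminus\{0\}$ since the class of functions analytic on a fixed open punctured disk is itself a ring and $1-\zeta^{-1}z$ is entire. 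Hence $f_1 f_2 = \sqrt{1-\zeta^{-1}z}\, P(z) + Q(z)$ belongs to $F$.

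I expect no real obstacle: the only point that requires attention is the product computation, which exploits precisely the fact that $\bigl(\sqrt{1-\zeta^{-1}z}\bigr)^2$ is polynomial and thus absorbed into the ``analytic part'' $Q$. Everything else is bookkeeping inherited from the ambient function ring on $\Omega\setminus\{0\}$.
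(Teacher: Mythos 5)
Your proof is correct and follows essentially the same route as the paper: both reduce the claim to closure of $F$ under addition and multiplication (inheriting the ring axioms from the ambient function ring), and both use the identical algebraic decomposition in which $\bigl(\sqrt{1-\zeta^{-1}z}\bigr)^2 = 1-\zeta^{-1}z$ is absorbed into the analytic summand $Q$. The only difference is cosmetic: you explicitly note the presence of $0$, $1$, and additive inverses, which the paper leaves implicit under the phrase ``the commutative ring laws are clearly preserved.''
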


\begin{proof}
Note that it suffices to show that $F$ is closed under addition and multiplication, as the commutative ring laws are clearly preserved. Let $(U, +, \times)$ be the commutative ring of functions analytic in $\Omega \setminus \set{0}$. Consider arbitrary $f,g \in F$ given by $f(z) = \sqrt{1-\zeta^{-1} z}\, P_f(z) + Q_f(z)$ and $g(z) = \sqrt{1-\zeta^{-1} z}\, P_g(z) + Q_g(z)$. 

Let us start with $f(z) + g(z)$. Note that
\begin{equation*}
 f(z) + g(z) = \sqrt{1-\zeta^{-1} z}\, \big(P_f(z) + P_g(z)\big) + Q_f(z) + Q_g(z). 
 \end{equation*}

Clearly, $f(z) + g(z) = \sqrt{1-\zeta^{-1} z}\, \overline{P}(z) + \overline{Q}(z)$ where both $\overline{P}(z) \in U$ and $\overline{Q}(z) \in U$. Hence, $f(z) + g(z) \in F$.

Now, let us consider $f(z)\cdot g(z)$. By rewriting, we obtain

\begin{eqnarray}\label{eq:sqrt-mult-eq}
f(z)\cdot g(z) &=& \sqrt{1-\zeta^{-1} z} \big(P_g(z) Q_f(z)+P_f(z) Q_g(z)\big)\\
\nonumber && +(1-\zeta^{-1}z)P_f(z) P_g(z) +Q_f(z) Q_g(z).
\end{eqnarray}
Clearly, $f(z)\cdot g(z) \in F$.
\end{proof}

\subsection{Removable singularities}\label{subsec:removable-singularities}
In order to apply~\autoref{th:singularity-analysis} to the analysis of a generating function $A(z)$, we have to guarantee that $A(z)$ is analytic at $z = 0$. If it is not the case, yet $A(z)$ has a \emph{removable pole singularity} at $z = 0$, we can consider its analytic extension $\widetilde{A}(z)$, instead of $A(z)$. The following theorem due to Bernhard Riemann provides a sufficient and necessary condition to determine whether $A(z)$'s pole singularity at $z = 0$ can be removed.

\begin{theorem}[Riemann's Removable Singularities Theorem, see e.g.~\cite{Krantz1999}]\label{th:riemann-rem-sing}
Let $f$ be analytic on the punctured disk $\Omega \setminus \set{z_0}$ of the complex plane. Then $f$ has an analytic extension on $\Omega$ if and only if $\lim_{z \to z_0} (z - z_0) f(z) = 0$. 
\end{theorem}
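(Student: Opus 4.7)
The plan is to prove both implications separately, with the forward direction being immediate from continuity and the backward direction requiring a standard trick.

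For the forward direction, suppose $f$ admits an analytic extension $\widetilde{f}$ on $\Omega$. Then $\widetilde{f}$ is in particular continuous at $z_0$, hence bounded in a neighbourhood of $z_0$. Since $\widetilde{f}$ agrees with $f$ on the punctured disk, we have $(z - z_0)f(z) = (z - z_0)\widetilde{f}(z) \to 0 \cdot \widetilde{f}(z_0) = 0$ as $z \to z_0$.

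For the backward direction, the idea is to multiply $f$ by $(z-z_0)^2$ so as to kill the singularity, and then exploit a vanishing of order two to factor the square back out. Concretely, I would define
\begin{equation*}
g(z) = \begin{cases} (z - z_0)^2 f(z) & \text{if } z \in \Omega \setminus \set{z_0}, \\ 0 & \text{if } z = z_0. \end{cases}
\end{equation*}
Away from $z_0$, $g$ is analytic as a product of analytic functions. At $z_0$, one checks complex differentiability directly from the difference quotient:
\begin{equation*}
\lim_{z \to z_0} \frac{g(z) - g(z_0)}{z - z_0} = \lim_{z \to z_0} (z - z_0) f(z) = 0,
\end{equation*}
where the last equality is the hypothesis. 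Hence $g$ is analytic on all of $\Omega$ with $g(z_0) = 0$ and $g'(z_0) = 0$.

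It remains to factor out $(z - z_0)^2$ from $g$ and identify the quotient with the desired extension of $f$. Since $g$ is analytic at $z_0$ with vanishing zeroth and first Taylor coefficients, its power series expansion around $z_0$ has the form $g(z) = \sum_{n \geq 2} c_n (z - z_0)^n = (z - z_0)^2 h(z)$, where $h(z) := \sum_{n \geq 0} c_{n+2}(z - z_0)^n$ is analytic in the same disk of convergence. Extending $h$ to the rest of $\Omega$ by $h(z) := g(z)/(z-z_0)^2$, we obtain a function analytic on $\Omega$ which coincides with $f$ on $\Omega \setminus \set{z_0}$. This is the sought analytic extension. The main point to get right is the differentiability check at $z_0$, but even this reduces directly to the hypothesis, so no serious obstacle arises; indeed, the proof is entirely standard and the author likely just refers to~\cite{Krantz1999}.
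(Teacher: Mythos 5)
Your proof is correct. The paper does not actually prove this theorem—it cites it directly from Krantz~\cite{Krantz1999}—so there is no proof in the source to compare against; you anticipate this yourself in your closing remark. Your argument is the classical one: the forward direction from continuity (hence local boundedness) of the extension, and the backward direction via the auxiliary function $g(z) = (z-z_0)^2 f(z)$, checked to be complex differentiable at $z_0$ from the difference quotient, so that $g$ vanishes to order at least two there and the quotient $g(z)/(z-z_0)^2$ supplies the extension. The one step you leave tacit is the passage from ``$g$ is complex differentiable at every point of $\Omega$'' to ``$g$ is analytic on $\Omega$''; this rests on Goursat's theorem (holomorphy on an open set implies analyticity), which is standard but worth naming if you want the argument to be fully self-contained. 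Everything else is airtight.
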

As a direct consequence, we obtain the following technical lemma.
		
\begin{lemma}\label{lem:sing-powers}
Let $\Omega \setminus \set{z_0}$ be a punctured disk on the complex plane. Suppose that $f$ is analytic on $\Omega \setminus \set{z_0}$ and has an analytic continuation at $z = z_0$. Then for each $n \geq 2$, the function ${f(z)}^n$ has an analytic extension on $\Omega$.
\end{lemma}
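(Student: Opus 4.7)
The plan is to apply Riemann's Removable Singularities Theorem (\autoref{th:riemann-rem-sing}) directly to the function $g(z) := f(z)^n$ on the punctured disk $\Omega \setminus \set{z_0}$. Since $f$ is analytic on $\Omega \setminus \set{z_0}$, so is any power of $f$; thus $g$ satisfies the analyticity hypothesis of the theorem. It therefore suffices to verify that $\lim_{z \to z_0} (z - z_0)\, g(z) = 0$.

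The key observation is that the analytic continuation hypothesis on $f$ supplies a (finite) limit $L := \lim_{z \to z_0} f(z)$, namely the value at $z_0$ of the analytic extension $\widetilde{f}$. First I would note that this implies $f$ is bounded on some punctured neighbourhood $0 < |z - z_0| < \delta$ contained in $\Omega$: there exists $M > 0$ with $|f(z)| \leq M$ on this neighbourhood. Consequently, for all such $z$,
\begin{equation*}
\bigl| (z - z_0)\, f(z)^n \bigr| \;\leq\; |z - z_0| \cdot M^n,
\end{equation*}
which tends to $0$ as $z \to z_0$. By \autoref{th:riemann-rem-sing} applied to $g$, the function $f(z)^n$ admits an analytic extension on all of $\Omega$, as required.

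There is no real obstacle here; the statement is a direct structural consequence of Riemann's theorem. The only thing to double-check is that the boundedness argument goes through for every $n \geq 2$, which is immediate since a finite limit yields a local bound, and powers of a locally bounded function are locally bounded. Alternatively, one could bypass the bound altogether by observing that $\widetilde{f}(z)^n$ is analytic on $\Omega$ and agrees with $f(z)^n$ on $\Omega \setminus \set{z_0}$, so it is itself the desired analytic extension; this phrasing avoids invoking \autoref{th:riemann-rem-sing} a second time but is essentially equivalent.
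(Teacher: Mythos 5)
Your proof is correct and matches the paper's intent: the paper gives no explicit proof, calling the lemma ``a direct consequence'' of Riemann's Removable Singularities Theorem, which is exactly the boundedness argument you spell out. Your closing alternative---that $\widetilde{f}(z)^n$ is itself analytic on $\Omega$ and agrees with $f(z)^n$ off $z_0$---is arguably the cleaner route, since it needs only closure of analytic functions under products and avoids re-invoking Riemann's criterion, but both are fine.
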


\section{Basis-independent results}\label{sec:basis-independent-results}
In this section we are interested in universal basis-independent asymptotic properties of combinatory logic. We prove certain general results about labelled plane binary trees, deriving the combinatory logic results as immediate corollaries.

\begin{definition}
Suppose that $L$ is a finite set of $d$ distinct \emph{labels}. Then, the set of $L$-trees consists of plane binary trees where each leaf has a corresponding label in the set $L$. We use $\T{L}$ to denote the set of $L$-trees.
\end{definition}

Let us notice that the asymptotic growth rate of $L$-trees greatly depends on the asymptotic approximation of Catalan numbers $\catalan{n}$ counting the number of plane binary trees with $n$ inner nodes. It is well known that
\begin{equation*}
\catalan{n} = \frac{1}{n+1}{2n \choose n} \qquad \text{and} \qquad \catalan{n} = 4^n \frac{n^{-\nicefrac{3}{2}}}{\sqrt{\pi}}.
\end{equation*}

\begin{prop}
Let $\T{L}$ be the set of $L$-trees over a set $L$ of size $d$. Suppose that $|\cdot| \colon \T{L} \to \nats$ is a function assigning each $L$-tree $t$ the number of binary nodes in $t$. Then $(\T{L}, |\cdot|)$ forms a combinatorial class.
\end{prop}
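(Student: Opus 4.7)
The plan is to verify directly the two requirements for a combinatorial class: that the counting function is well-defined (i.e., the class decomposes into finite slices of each size), and implicitly that $\T{L}$ is countable. Both will follow from an explicit enumeration of $L$-trees of a fixed size $n$.

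First I would fix $n \in \nats$ and count the $L$-trees with exactly $n$ binary nodes. A plane binary tree with $n$ internal nodes has precisely $n+1$ leaves, which is a standard fact for binary trees that I would simply invoke (or derive in one line by induction on $n$). The underlying unlabelled shape is one of the $\catalan{n}$ plane binary trees enumerated by the Catalan numbers recalled in the preceding paragraph. To obtain an $L$-tree, it remains to label each of the $n+1$ leaves independently with an element of $L$, which admits $d^{n+1}$ choices since $|L| = d$. Multiplying these two independent contributions yields
\begin{equation*}
\#\{\, t \in \T{L} : |t| = n \,\} = \catalan{n}\cdot d^{n+1},
\end{equation*}
which is manifestly finite for every $n$.

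This already establishes that $|\cdot| \colon \T{L} \to \nats$ has finite preimage on each $n \in \nats$, so $(\T{L}, |\cdot|)$ forms a combinatorial class by the definition recalled in \autoref{sec:combinatorial-classes}. Countability of $\T{L}$ itself follows immediately as the countable union of the finite slices $|\cdot|^{-1}(n)$.

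There is no real obstacle here: the argument is essentially a bookkeeping observation combining the well-known Catalan enumeration of plane binary tree shapes with the product rule for independent leaf labellings. The only point requiring minimal care is the leaf count $n+1$, which I would either cite as standard or dispatch with a one-line induction.
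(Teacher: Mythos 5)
Your proof is correct and follows essentially the same route as the paper's: fix $n$, observe a tree with $n$ internal nodes has $n+1$ leaves, count $\catalan{n}$ shapes and $d^{n+1}$ labellings, and conclude the size-$n$ slice is finite. The only cosmetic difference is that you additionally spell out the countability of $\T{L}$, which the paper leaves implicit.
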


\begin{proof}
Let us start with noticing that $t \in \T{L}$ has $|t| + 1$ leaves. Fix $n \in \nats$. The number of plane binary trees with $n$ inner nodes is counted by the $n$th Catalan number $\catalan{n}$. Taking into account all possible $L$-labellings of $n+1$ leaves and using the closed-form expression for $\catalan{n}$, we derive the following formula counting the number $\T{L,n}$ of $L$-terms of size $n$.
\begin{equation*}
\T{L,n} = d^{n+1} \catalan{n} = \frac{d^{n+1}}{n+1}\binom{2n}{n}.
\end{equation*}
\end{proof}

\subsection{Counting $L$-trees containing fixed $L$-trees as subtrees}\label{subsec:counting-l-trees}
Suppose that $t \in \T{L}$. Let $\overline{\T{L}}(z)$ denote the generating function counting the cardinalities of $L$-trees containing $t$ as a subtree. In the following series of propositions, we derive the closed-form solution for $\overline{\T{L}}(z)$ and check the conditions of~\autoref{th:singularity-analysis} used subsequently to show that in fact $[z^n]\overline{\T{L}}(z) \sim [z^n]\T{L}(z)$, independently of $L$.

\begin{prop}
Let $\T{L}$ be the set of $L$-trees where $|L| = d$. Then its counting sequence ${\set{\T{L,n}}}_{n \in \nats}$ has a corresponding generating function $\T{L}(z)$ given by
\begin{equation}\label{eq:TL(z)-closed-form-solution}
\T{L}(z) = \frac{1-\sqrt{1-4  d z}}{2 z}.
\end{equation}
\end{prop}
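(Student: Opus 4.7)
The plan is to set up a functional equation for $\T{L}(z)$ from the recursive structure of $L$-trees, solve the resulting quadratic, and then pick the correct branch.

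First I would observe that the class $\T{L}$ decomposes recursively: every $L$-tree is either a single leaf carrying one of the $d$ labels (contributing $0$ to the size), or an application of two $L$-trees (contributing one binary node). Translating this combinatorial specification into the language of ordinary generating functions via the usual symbolic dictionary (disjoint union becomes sum, Cartesian product becomes product, marking an atom of size one multiplies by $z$) yields
\begin{equation*}
\T{L}(z) = d + z\, \T{L}(z)^2.
\end{equation*}

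Next I would solve this as a quadratic in $\T{L}(z)$: rewriting it as $z \T{L}(z)^2 - \T{L}(z) + d = 0$ gives, by the quadratic formula,
\begin{equation*}
\T{L}(z) = \frac{1 \pm \sqrt{1 - 4 d z}}{2 z}.
\end{equation*}
Only one of the two branches corresponds to a well-defined formal power series at $z = 0$. The branch with the plus sign has a pole at the origin (numerator tending to $2$ as $z \to 0$), whereas $\T{L}(z)$ must be analytic there, with $\T{L}(0) = d$ (the number of $L$-trees of size $0$, namely the $d$ leaves). Selecting the minus sign and checking by L'Hôpital that $\lim_{z\to 0}\frac{1 - \sqrt{1 - 4dz}}{2z} = d$ confirms the correct branch, yielding the claimed closed form \eqref{eq:TL(z)-closed-form-solution}.

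There is no real obstacle here; the only point requiring a little care is the branch selection, which is settled by the initial condition $\T{L}(0) = d$. As a sanity check one may verify that extracting coefficients reproduces the explicit formula $\T{L,n} = \frac{d^{n+1}}{n+1}\binom{2n}{n}$ obtained in the previous proposition, via the generalised binomial expansion of $\sqrt{1 - 4 d z}$.
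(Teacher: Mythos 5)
Your proposal is correct and follows essentially the same route as the paper: derive the functional equation $\T{L}(z) = d + z\,\T{L}(z)^2$ from the symbolic specification $\T{L} = L + \T{L}\times\T{L}$, solve the quadratic, and select the branch that satisfies $\lim_{z\to 0}\T{L}(z) = d$. The extra checks you mention (L'H\^opital for the limit and the coefficient sanity check against $\frac{d^{n+1}}{n+1}\binom{2n}{n}$) are sound but not in the paper's proof.
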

 
\begin{proof}
Note that $\T{L}$ can be defined as $\T{L} = L + \T{L} \times \T{L}$,
which translates into the following function equation defining $\T{L}(z)$:
\begin{equation}\label{eq:TL(z)-fun-eq}
\T{L}(z) = d + z {\T{L}(z)}^2.
\end{equation}
Solving~\eqref{eq:TL(z)-fun-eq} for $\T{L}(z)$, we obtain two possible solutions:
\begin{equation*}
\T{L}(z) = \frac{1 \pm \sqrt{1-4  d z}}{2 z}.
\end{equation*}
Since the number of $L$-trees of size $0$ is equal to $d$, the limit $\lim_{z \to 0} \T{L}(z) = d$. It follows that~\eqref{eq:TL(z)-closed-form-solution} is indeed the desired solution.
\end{proof} 
 
\begin{prop}
Let $L$ be a set of $d$ distinct labels. Assume that $t \in \T{L}$ is an $L$-tree of size $p \geq 1$. Then the set of $L$-trees containing $t$ as a subtree, denoted as $\overline{\T{L}}$, has the following generating function:
\begin{equation}\label{eq:overline-TL(z)-closed-form-solution}
\overline{\T{L}}(z) = \frac{-\sqrt{1-4 d z} + \sqrt{1 - 4 d z+4 z^{p+1}}}{2 z}.
\end{equation}
\end{prop}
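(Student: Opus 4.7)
The plan is to compute the generating function of the complementary class and subtract. Let $\mathcal{U}$ denote the set of $L$-trees that do \emph{not} contain $t$ as a subtree, and let $U(z)$ be its ordinary generating function. Then $\overline{\T{L}}(z) = \T{L}(z) - U(z)$, so it suffices to determine $U(z)$.

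I would derive a functional equation for $U(z)$ by analysing the recursive structure of $\mathcal{U}$. A tree $\tau$ avoids $t$ as a subtree if and only if $\tau \neq t$ and, whenever $\tau$ has the form of a binary node, both of its children also avoid $t$. Because $p \geq 1$, the tree $t$ is not a leaf, so every leaf lies in $\mathcal{U}$, contributing a constant $d$. Binary trees both of whose children lie in $\mathcal{U}$ contribute $z U(z)^2$. Among the trees counted by $zU(z)^2$, the tree $t$ itself appears exactly once, using the unique decomposition of a binary tree into left and right subtrees together with the fact that both proper subtrees of $t$ are strictly smaller than $t$ and therefore automatically lie in $\mathcal{U}$. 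Subtracting this single overcounted term gives
$$U(z) = d + z U(z)^2 - z^p.$$

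Solving this quadratic in $U(z)$ yields the two branches
$$U(z) = \frac{1 \pm \sqrt{1 - 4 d z + 4 z^{p+1}}}{2 z}.$$
The correct branch is fixed by the initial condition $\lim_{z \to 0} U(z) = d$ (there are exactly $d$ leaves in $\mathcal{U}$), which forces the minus sign. Subtracting from the closed form of $\T{L}(z)$ established in the previous proposition gives
$$\overline{\T{L}}(z) \;=\; \T{L}(z) - U(z) \;=\; \frac{1 - \sqrt{1-4 d z}}{2 z} - \frac{1 - \sqrt{1 - 4 d z + 4 z^{p+1}}}{2 z},$$
which simplifies directly to the claimed expression.

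The only subtle step is the inclusion–exclusion: one must verify that $t$ is counted in $zU(z)^2$ with multiplicity exactly one, neither zero nor more. This follows from unique binary decomposition and the observation that both children of $t$ lie in $\mathcal{U}$ for size reasons. Everything else is a routine quadratic solve and a branch choice dictated by the value at the origin.
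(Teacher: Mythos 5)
Your proof is correct, but it takes a genuinely different decomposition from the paper's. The paper writes a functional equation for $\overline{\T{L}}$ directly: a tree contains $t$ iff it equals $t$, or at least one child contains $t$; since ``at least one'' double-counts the case where both children contain $t$, the paper's equation carries a quadratic correction term $-z\,\overline{\T{L}}(z)^2$, giving $\overline{\T{L}}(z) = z^p + 2z\,\T{L}(z)\overline{\T{L}}(z) - z\,\overline{\T{L}}(z)^2$. You instead pass to the complementary class $\mathcal{U}$ of $t$-avoiders, whose recursive structure is an unrestricted ``leaf, or node with both children avoiding,'' with a monomial correction $-z^p$ because $t$ itself would otherwise slip in (both its children are strictly smaller, hence automatically avoiders). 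Both routes require exactly one overcounting correction and the two quadratics are related by $U = \T{L} - \overline{\T{L}}$, so neither is materially harder; your version has the mild advantages that the functional equation for $U$ involves no product $\T{L}\cdot U$ and that the branch-selection condition $\lim_{z\to 0}U(z)=d$ is read off immediately from the leaves. Your justification that $t$ is counted with multiplicity exactly one in $zU(z)^2$ is the key point, and you argue it correctly via unique left/right decomposition and the size argument for $t$'s children.
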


\begin{proof}
Let us start with noticing that any $L$-tree containing $t$ as a subtree is either equal to $t$, or one of its left or right subtrees contains $t$ whereas the other one is a tree in $\T{L}$. However, since trees in $\T{L}$ may contain $t$ as a subtree, we have to subtract trees containing $t$ in both branches to avoid double counting. This specification allows us to write down the following functional equation defining $\overline{\T{L}}(z)$:
\begin{equation}\label{eq:TL(z)-fixed-term-fun-eq}
\overline{\T{L}}(z) = z^{p} + 2 z  \T{L}(z) \overline{\T{L}}(z) - z {\overline{\T{L}}(z)}^{2}.
\end{equation}
Solving~\eqref{eq:TL(z)-fixed-term-fun-eq} for $\overline{\T{L}}(z)$ we obtain two possible solutions:
\begin{equation*}
\frac{-\sqrt{1-4 d z} \pm \sqrt{1 - 4 d z+4 z^{p+1}}}{2 z}.
\end{equation*}
Note that $p \geq 1$ and hence there are no $L$-trees of size $0$ containing $t$ as a subterm. It follows that $\lim_{z \to 0} \overline{\T{L}}(z) = 0$, yielding the desired solution.
\end{proof}

\begin{prop}\label{prop:T(z)-rho-only-sing}
Let $\zeta = \frac{1}{4d}$. Then $\zeta$ is the only singularity on both $\T{L}(z)$ and $\overline{\T{L}}(z)$'s circle of convergence.
\end{prop}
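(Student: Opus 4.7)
The strategy is to handle the two generating functions in turn, reducing the analysis of $\overline{\T{L}}(z)$ to that of $\T{L}(z)$ by complementary counting.

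For $\T{L}(z) = (1-\sqrt{1-4dz})/(2z)$, the only candidate singularities are the branch point $z = \zeta$ of $\sqrt{1-4dz}$ and the pole $z = 0$ of $1/(2z)$. The latter is removable since the numerator vanishes at $z = 0$; \autoref{th:riemann-rem-sing} provides the analytic continuation, with value $\T{L}(0) = d$. On the circle $|z| = \zeta$, the polynomial $1 - 4dz$ vanishes only at $z = \zeta$ itself, so $\sqrt{1-4dz}$ is analytic at every other point of the circle, and $\zeta$ is the unique singularity there.

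For $\overline{\T{L}}(z)$, I would introduce the complementary class $\widetilde{\T{L}}$ of $L$-trees that do \emph{not} contain $t$ as a subtree, with generating function $\widetilde{\T{L}}(z) = \T{L}(z) - \overline{\T{L}}(z)$. A direct subtraction from \eqref{eq:TL(z)-closed-form-solution} and \eqref{eq:overline-TL(z)-closed-form-solution} gives $\widetilde{\T{L}}(z) = (1 - \sqrt{Q(z)})/(2z)$ with $Q(z) := 1 - 4dz + 4z^{p+1}$. The key claim is that $\widetilde{\T{L}}(z)$ extends analytically to an open neighborhood of the closed disk $|z| \leq \zeta$. Granted this, the identity $\overline{\T{L}}(z) = \T{L}(z) - \widetilde{\T{L}}(z)$ shows that on $|z| \leq \zeta$, $\overline{\T{L}}(z)$ inherits its singularities exclusively from $\T{L}(z)$, so that $\zeta$ is its unique singularity on the circle $|z| = \zeta$ as well.

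The candidate singularities of $\widetilde{\T{L}}(z)$ are the zeros of $Q$, together with a removable pole at $z = 0$ handled exactly as before. The crucial elementary observation is that $Q(r) > 0$ for every real $r \in [0, \zeta]$: both summands $1 - 4dr$ and $4r^{p+1}$ are non-negative for such $r$, and at $r = \zeta$ the second one is strictly positive since $p \geq 1$. Hence the smallest positive real zero $R_0$ of $Q$ satisfies $R_0 > \zeta$. Since $\widetilde{\T{L}}$ has non-negative coefficients, \autoref{th:pringsheim} forces its radius of convergence to be a singularity on the positive real axis, which must therefore coincide with $R_0$. In particular, every zero of $Q$ has modulus at least $R_0 > \zeta$, so $\widetilde{\T{L}}(z)$ is analytic in $|z| < R_0$, giving the required analyticity. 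The main obstacle, if one tried a direct attack on $\overline{\T{L}}(z)$ via its closed form, would be to rule out off-axis zeros of $Q$ inside the disk $|z| \leq \zeta$; Rouché-type estimates are delicate there because $|1-4dz|$ vanishes at the boundary point $z = \zeta$. Passing to the complementary class and invoking Pringsheim sidesteps this difficulty cleanly.
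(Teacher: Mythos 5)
Your proposal is correct and follows essentially the same route as the paper: introduce the complementary generating function $\widetilde{\T{L}}(z) = \T{L}(z) - \overline{\T{L}}(z) = \bigl(1 - \sqrt{Q(z)}\bigr)/(2z)$, observe it has non-negative coefficients as the generating function of $t$-avoiding $L$-trees, and combine Pringsheim's theorem with $Q(\zeta) > 0$ to rule out singularities of modulus $\zeta$ other than $\zeta$ itself. The only small difference is that you verify $Q > 0$ on all of $[0,\zeta]$ to locate the smallest real zero directly, whereas the paper checks $Q(\zeta) > 0$ alone and implicitly uses $\widetilde{\T{L}} \subseteq \T{L}$ to bound the radius of convergence from below; both close the same gap.
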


\begin{proof}
From~\eqref{eq:TL(z)-closed-form-solution} it is clear that $\zeta$ is the only singularity of $\T{L}(z)$ on the circle $|z| < \zeta$. Moreover, since $\sqrt{1 - 4d}$ is a part of $\overline{\T{L}}(z)$'s closed-form expression~\eqref{eq:overline-TL(z)-closed-form-solution}, it suffices to check that $F(z) = 1 - 4 d z+4 z^{p+1}$ has no complex roots of modulus $\zeta$. Note that we can rewrite~\eqref{eq:overline-TL(z)-closed-form-solution} as
\begin{eqnarray*}
\overline{\T{L}}(z) &=& \frac{1 -\sqrt{1-4 d z} - \left( 1 - \sqrt{1 - 4 d z+4 z^{p+1}} \right)}{2 z}\\
&=& \T{L}(z) - \frac{1 - \sqrt{1 - 4 d z+4 z^{p+1}}}{2 z}.
\end{eqnarray*}
Both $\T{L}(z)$ and $\overline{\T{L}}(z)$ are generating functions counting sequences of non-negative integers, hence the coefficients in the Maclaurin series of $\frac{1 - \sqrt{1 - 4 d z+4 z^{p+1}}}{2 z}$ are non-negative integers as well. To finish the proof we notice that 
\begin{equation*}
F(\zeta) = 4^{-p} \left(\frac{1}{d}\right)^{p+1} > 0
\end{equation*}
and hence due to~\hyperref[th:pringsheim]{Pringsheim's Theorem}, $F(z)$ cannot have complex roots of modulus $\zeta$.
\end{proof}

The generating functions $\T{L}(z)$ and $\overline{\T{L}}(z)$ are not defined at $z = 0$, however due to~\autoref{th:riemann-rem-sing}, both have analytic extensions to functions analytic in the origin and we can consider them instead of $\T{L}(z)$ and $\overline{\T{L}}(z)$ in the subsequent theorem.

\begin{theorem}\label{th:TL(z)-asymptotic-approx}
Both $[z^n]\T{L}(z)$ and $[z^n]\overline{\T{L}}(z)$ admit for large $n$ the following asymptotic approximation:
\begin{equation}\label{eq:TL(z)-asymptotic-approx}
[z^n]\T{L}(z) \sim [z^n]\overline{\T{L}}(z) \sim {(4 d)}^n \frac{-2d n^{-\nicefrac{3}{2}}}{\Gamma(-\frac{1}{2})}.
\end{equation}
\end{theorem}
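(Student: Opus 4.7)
The plan is to apply \autoref{th:singularity-analysis} separately to both generating functions. For each, I extract the Newton-Puiseux-type expansion
\begin{equation*}
f(z) = \sqrt{1-\zeta^{-1}z}\, g(z) + h(z)
\end{equation*}
around the unique dominant singularity $\zeta = \frac{1}{4d}$ determined in \autoref{prop:T(z)-rho-only-sing}, and read off the coefficient $\overline{C} = g(\zeta)$.

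For $\T{L}(z) = \frac{1-\sqrt{1-4dz}}{2z}$, I observe that $\sqrt{1-4dz} = \sqrt{1-\zeta^{-1}z}$, so the closed form already exhibits the decomposition with $g(z) = -\frac{1}{2z}$ and $h(z) = \frac{1}{2z}$, both analytic on $\C \setminus \{0\}$, where $z=0$ is a removable singularity of $\T{L}(z)$ itself (\autoref{th:riemann-rem-sing}). By \autoref{prop:T(z)-rho-only-sing}, $\zeta$ is the only further singularity, so on the required punctured disk the decomposition is legitimate and $\overline{C} = g(\zeta) = -\frac{1}{2\zeta} = -2d$. \autoref{th:singularity-analysis} then yields
\begin{equation*}
[z^n]\T{L}(z) \sim \zeta^{-n} \frac{-2d\, n^{-\nicefrac{3}{2}}}{\Gamma(-\tfrac{1}{2})} = (4d)^n \frac{-2d\, n^{-\nicefrac{3}{2}}}{\Gamma(-\tfrac{1}{2})}.
\end{equation*}

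For $\overline{\T{L}}(z)$, I re-use the identity
\begin{equation*}
\overline{\T{L}}(z) = \T{L}(z) - \frac{1 - \sqrt{1-4dz+4z^{p+1}}}{2z}
\end{equation*}
already displayed inside the proof of \autoref{prop:T(z)-rho-only-sing}. The key observation is that the second summand is analytic on a disk of radius strictly greater than $\zeta$: its Maclaurin coefficients are non-negative integers, so by \hyperref[th:pringsheim]{Pringsheim's Theorem} its radius of convergence is a singularity, and since $F(\zeta) = 4\zeta^{p+1} > 0$ prevents the inner square root from being singular at $\zeta$, that radius must exceed $\zeta$. Hence the second summand contributes only to $h(z)$, and the coefficient of $\sqrt{1-\zeta^{-1}z}$ in the Newton-Puiseux expansion of $\overline{\T{L}}(z)$ around $\zeta$ coincides with that of $\T{L}(z)$, namely $-2d$. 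A second invocation of \autoref{th:singularity-analysis} then produces the identical asymptotic approximation.

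The main obstacle is not a deep mathematical one but rather the bookkeeping needed to justify the decomposition on a disk of radius $\zeta + \eta$ rather than merely on $|z| < \zeta$, which I plan to discharge using the Pringsheim-based argument sketched above together with \autoref{prop:T(z)-rho-only-sing}. Beyond that, the reasoning is a direct reading of the Newton-Puiseux coefficient followed by a mechanical application of \autoref{th:singularity-analysis}.
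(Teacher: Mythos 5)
Your proof is correct and follows essentially the same route as the paper's: extract the coefficient $g(\zeta) = -\tfrac{1}{2\zeta} = -2d$ of $\sqrt{1-\zeta^{-1}z}$ from the explicit closed forms and invoke \autoref{th:singularity-analysis}, using \autoref{prop:T(z)-rho-only-sing} for the required analyticity. Your rewriting of $\overline{\T{L}}(z)$ as $\T{L}(z)$ minus a function analytic beyond $\zeta$ is algebraically identical to the paper's direct decomposition $\sqrt{1-4dz}\bigl(-\tfrac{1}{2z}\bigr) + \tfrac{\sqrt{1-4dz+4z^{p+1}}}{2z}$, and you merely spell out the Pringsheim step a bit more explicitly than the paper, which settles it in the preceding proposition.
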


\begin{proof}
Let us rewrite the closed-form solutions of $\T{L}(z)$~\eqref{eq:TL(z)-closed-form-solution} and $\overline{\T{L}}(z)$~\eqref{eq:overline-TL(z)-closed-form-solution} as
\begin{eqnarray*}
\T{L}(z) &=& \sqrt{1-4 d}\left( -\frac{1}{2z} \right) + \frac{1}{2z}, \quad \text{and}\\
\overline{\T{L}}(z) &=& \sqrt{1-4 d}\left( -\frac{1}{2z} \right) + \frac{\sqrt{1 - 4 d z+4 z^{p+1}}}{2 z}.
\end{eqnarray*}
As all the assumptions hold, the result follows now easily by applying~\autoref{th:singularity-analysis}.
\end{proof}

Immediately, we obtain the following corollary theorem.
\begin{theorem}
Let $t \in T_L$. Then asymptotically almost all $L$-trees contain $t$ as a subtree.
\end{theorem}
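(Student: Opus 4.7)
The plan is to derive this statement as an immediate corollary of \autoref{th:TL(z)-asymptotic-approx}. By definition, the subclass of $L$-trees containing $t$ as a subtree is precisely $\overline{\T{L}}$, so what needs to be shown is that $\density{\overline{\T{L}}}{\T{L}} = 1$, i.e.
\begin{equation*}
\lim_{n\to\infty} \frac{[z^n]\overline{\T{L}}(z)}{[z^n]\T{L}(z)} = 1.
\end{equation*}

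First I would invoke \autoref{th:TL(z)-asymptotic-approx}, which supplies the asymptotic approximations
\begin{equation*}
[z^n]\T{L}(z) \sim {(4 d)}^n \frac{-2d\, n^{-\nicefrac{3}{2}}}{\Gamma(-\frac{1}{2})} \sim [z^n]\overline{\T{L}}(z),
\end{equation*}
both with identical leading exponential factor $(4d)^n$, identical sub-exponential factor $n^{-3/2}$, and identical multiplicative constant $\tfrac{-2d}{\Gamma(-1/2)}$. Then I would simply take the ratio: every factor cancels and the limit equals $1$, which is the definition of ``asymptotically almost all''. The hypothesis $p \geq 1$ needed to apply \autoref{th:TL(z)-asymptotic-approx} to $\overline{\T{L}}(z)$ is automatically satisfied, since $t$ being a member of $\T{L}$ is a tree whose size (number of binary nodes) is at least $0$; if $t$ is a single leaf ($p = 0$) then every $L$-tree trivially contains $t$ up to label choice, and the statement is vacuous in a strong sense, so only the case $p \geq 1$ needs the analytic argument.

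There is essentially no obstacle — the entire analytic work (locating the unique dominant singularity at $\zeta = \frac{1}{4d}$ via \autoref{prop:T(z)-rho-only-sing}, verifying the square-root form required by \autoref{th:singularity-analysis}, and extracting the coefficient asymptotics) has already been discharged in the preceding propositions. This final statement is a one-line consequence of matching leading-order asymptotics, and its role is merely to translate the analytic equivalence of counting sequences into the combinatorial language of asymptotic density.
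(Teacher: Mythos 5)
Your handling of the main case $p \geq 1$ is exactly the paper's argument: invoke \autoref{th:TL(z)-asymptotic-approx} and observe that $[z^n]\overline{\T{L}}(z)$ and $[z^n]\T{L}(z)$ share the same leading asymptotic term, so their ratio tends to $1$. That part is fine.

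However, your dismissal of the $p = 0$ case is wrong and leaves a genuine gap. You write that ``if $t$ is a single leaf ($p=0$) then every $L$-tree trivially contains $t$ up to label choice, and the statement is vacuous in a strong sense.'' But $t$ is a \emph{labelled} leaf, and ``contains $t$ as a subtree'' means containing that specific label. When $d = |L| > 1$, there are plenty of $L$-trees of every size that avoid $t$ entirely — namely all the $(L\setminus\{t\})$-trees — so the statement is emphatically not vacuous and must be proved. The fix (and what the paper does) is to compare growth rates: the $L$-trees avoiding $t$ are precisely the $(L\setminus\{t\})$-trees, whose coefficients grow like $\bigl(4(d-1)\bigr)^n n^{-3/2}$ by equation~\eqref{eq:TL(z)-closed-form-solution}, which is exponentially smaller than the $(4d)^n n^{-3/2}$ growth of all $L$-trees; hence the density of trees avoiding $t$ is $0$. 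Only when $d = 1$ is the $p=0$ case genuinely trivial. Your proof needs this extra argument to cover $p = 0$, $d > 1$; as written, it does not establish the theorem.
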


\begin{proof}
In the case of $|t| \geq 1$, our claim follows directly from~\autoref{th:TL(z)-asymptotic-approx}. Suppose that $|t| = 0$, i.e.~$t$ is a primitive combinator. We can assume that $|L| > 1$, as otherwise our claim is trivial. Let us consider the set $\T{L \setminus \set{t}}$ of $L$-trees avoiding $t$. Note that from~\eqref{eq:TL(z)-closed-form-solution}, we have $[z^n]\T{L \setminus \set{t}}(z) \sim c_1 4^n(d-1)^n n^{-3/2}$, whereas $[z^n]\T{L}(z) \sim c_2 (4d)^n n^{-3/2}$ for some constants $c_1$ and $c_2$. It follows that asymptotically almost no $L$-tree avoids the primitive combinator $t$, finishing the proof.
\end{proof}

The above theorem provides a general result showing that `local' properties of $L$-trees propagating to supertrees, span asymptotically almost the whole set of $L$-trees. Using the natural bijection between $\B$-combinators and $\B$-trees, we can reinterpret this observation in the language of combinatory logic and state that each `local' property of $\B$-combinators closed under taking superterms is typical, i.e.~has asymptotic probability $1$ in the set of all $\B$-combinators. In particular, we obtain the following corollaries generalising the results in~\cite{dgkrtz}.

\begin{cor}\label{cor:subterm-prop}
For each Turing-complete basis of primitive combinators $\B$, asymptotically almost no $\B$-combinator is in normal form, simply typeable nor strongly normalising.
\end{cor}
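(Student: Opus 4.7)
The plan is to reduce each of the three claims to the preceding subterm-occurrence theorem: for any fixed $t \in \T{\B}$, asymptotically almost every $\B$-tree contains $t$ as a subtree. Via the natural bijection between $\B$-combinators and $\B$-labelled plane binary trees, it suffices to exhibit, for each property $P$ in question, a specific witness combinator $t_P \in \class{\B}$ whose containment as a subterm precludes $P$. The preceding theorem then yields asymptotic density one for combinators failing $P$, which is what we want.

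For the non-normal-form case, I would fix any primitive $\comb{X} \in \B$ of reduction arity $m \geq 1$, choose arbitrary combinators $P_1, \ldots, P_m \in \class{\B}$, and set $t_{\mathrm{NF}} := \comb{X} P_1 \cdots P_m$. Then $t_{\mathrm{NF}}$ is a redex and, since reducibility propagates upward under the congruence of $\to$, no combinator containing $t_{\mathrm{NF}}$ as a subterm is in normal form.

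For the other two cases, exploit Turing-completeness to construct $\overline{\comb{S}}, \overline{\comb{K}} \in \class{\B}$, and set $\overline{\comb{I}} := \overline{\comb{S}}\, \overline{\comb{K}}\, \overline{\comb{K}}$ together with $\overline{\omega} := \overline{\comb{S}}\, \overline{\comb{I}}\, \overline{\comb{I}}$. For typeability, I would use the standard fact that $\omega = \comb{S}\comb{I}\comb{I}$ is not typeable in $TA_{\comb{S}\comb{K}}$ (the usual unification attempt produces the cyclic equation $\beta = \beta \to \gamma$); this transfers to $\overline{\omega}$ in $TA_\B$ because the paper's hypothesis forces $TA_\B$ to assign $\overline{\comb{S}}, \overline{\comb{K}}$ precisely the types of $\comb{S}, \comb{K}$ in $TA_{\comb{S}\comb{K}}$. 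A routine induction on typing derivations shows subterms of typeable combinators are themselves typeable, so $t_T := \overline{\omega}$ serves as a witness. For strong normalisation, extensional equivalence of $\overline{\comb{S}}$ with $\comb{S}$ yields a nonempty reduction $\overline{\comb{S}}\, x\, y\, z \to^{+} x z (y z)$, whence $\overline{\omega}\,\overline{\omega} \to^{+} \overline{\omega}\,\overline{\omega}$ is a non-trivial loop; congruence of $\to$ propagates this infinite reduction to every combinator containing $t_{\mathrm{SN}} := \overline{\omega}\,\overline{\omega}$.

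The main obstacles lie in the three propagation-to-supertree arguments. The non-normal-form case is immediate from congruence. The typeability case requires a short structural induction on typing derivations, together with the explicit appeal to the hypothesis identifying $TA_\B$ and $TA_{\comb{S}\comb{K}}$ on $\overline{\comb{S}}, \overline{\comb{K}}$. The non-strong-normalisation case is the most delicate, since one must actually produce a nonempty reduction loop for $\overline{\omega}\,\overline{\omega}$ inside an arbitrary Turing-complete $\B$; this rests on extensional equivalence yielding at least one reduction step in $\overline{\comb{S}}\, x\, y\, z \to^{+} x z (y z)$, which holds because $\overline{\comb{S}}$ is a nontrivial term built from the primitives of $\B$ and hence its head must be reduced at least once.
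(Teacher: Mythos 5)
Your proposal is correct and ultimately rests on the same key step as the paper: apply the preceding theorem (that asymptotically almost all $L$-trees contain any fixed $t$ as a subtree) to a carefully chosen witness. The difference is in how you package the witnesses. You use three separate terms $t_{\mathrm{NF}}$, $t_T = \overline{\omega}$, and $t_{\mathrm{SN}} = \overline{\omega}\,\overline{\omega}$, one for each property. The paper's one-line proof fixes a \emph{single} witness $t := \overline{\comb{S}}\,\comb{I}\,\comb{I}\,(\overline{\comb{S}}\,\comb{I}\,\comb{I})$ and implicitly reads all three exclusions off the one fact that $t$ is not strongly normalising: a term that is not SN necessarily contains a redex (so no superterm is in normal form), and since typeability implies SN, no superterm is typeable either. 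Your decomposition costs nothing and is arguably more explicit; the paper's is slightly more economical. One point where your version is actually more careful than the paper's: you flag the need for $\overline{\comb{S}}\,x\,y\,z \to^{+} xz(yz)$ to be a \emph{nonempty} reduction so that $\overline{\omega}\,\overline{\omega}$ genuinely loops rather than trivially rewriting to itself in zero steps; the paper leaves this subtlety unstated. Your typeability witness $\overline{\omega}$ is also a legitimate alternative (it is in normal form yet untypeable), though it requires the extra observation that typeability is closed under subterms, which you correctly note.
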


\begin{proof}
Fix $t := \overline{\comb{S}} \comb{I} \comb{I} (\overline{\comb{S}} \comb{I} \comb{I})$ where $\comb{I} := \overline{\comb{S} \comb{K} \comb{K}}$.
\end{proof}

\subsection{Counting normalising combinators}\label{subsec:counting-normalizing-combinators}
Let $\B$ be a Turing-complete set of primitive combinators. Let us start with the classical observation is that the set of normalising $\B$-combinators is undecidable. It follows that its corresponding generating function has no computable closed-form solution. For that reason we take the following approach. We find feasible subclasses of normalising and non-normalising $\B$-combinators and use them to bound the density of normalising combinators $\WN_\B$. Let us start with recalling the famous standardisation theorem.

\begin{theorem}[Standardisation theorem, see e.g.~\cite{curry-feys}]\label{th:standardization-theorem}
If $M$ is normalising, then the leftmost outermost reduction always leads to $M$'s normal form.
\end{theorem}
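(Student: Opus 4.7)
The plan is to prove the claim in two stages: first establish confluence of the reduction relation, so that normal forms are unique when they exist; then show that the leftmost-outermost strategy is \emph{normalising}, in the sense that whenever a combinator admits some reduction to normal form, the leftmost-outermost reduction reaches it.

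For confluence I would introduce a parallel reduction $\Rightarrow$ permitting simultaneous contraction of arbitrarily many pairwise non-overlapping redexes. Since the axiom schemes for $\comb{S}$ and $\comb{K}$ (and, by the Turing-completeness assumption, the relevant schemes of any admissible basis extensionally equivalent to $\set{\comb{S},\comb{K}}$) are left-linear and their left-hand sides do not overlap, the rewriting system is \emph{orthogonal}. Orthogonality makes the diamond property for $\Rightarrow$ a direct structural induction, and since $\to^{*}$ coincides with the transitive closure of $\Rightarrow$, confluence of ordinary reduction follows. Uniqueness of normal forms is then immediate.

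The second stage is the substance of the statement. Following the classical argument, I would introduce \emph{labelled reduction}: each subterm carries a natural-number label, a contraction decrements the labels of the participating subterms in a controlled way, and a redex whose labels have been exhausted is frozen and may no longer be contracted. One then shows that (i) labelled reduction is strongly normalising for every finite initial labelling, by exhibiting a multiset measure on labelled terms that strictly decreases at each contraction, and (ii) any ordinary reduction $M \to^{*} N$ can be lifted to a labelled reduction provided the initial labels on $M$ are chosen large enough, e.g.~exceeding the length of the sequence. Combining these, the leftmost-outermost reduction of $M$ must terminate whenever some reduction from $M$ terminates, and by confluence its endpoint is the unique normal form of $M$.

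The main obstacle is stage two. The naive attempt — arguing directly that leftmost-outermost reduction cannot loop — fails because the contraction rule for $\comb{S}$, namely $\comb{S}\,x\,y\,z \to x\,z\,(y\,z)$, duplicates the subterm $z$ and may genuinely enlarge the term before later contractions shrink it. The labelled-reduction machinery supplies a strictly decreasing measure in spite of this duplication, but one has to carefully verify that label decrements propagate correctly through copied subterms and that the leftmost-outermost strategy never freezes a redex that the unlabelled strategy would eventually contract. This bookkeeping is the delicate combinatorial core of the proof.
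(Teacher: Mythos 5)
The paper itself does not prove this statement; it cites Curry--Feys, where the result is obtained as a corollary of the standardisation theorem proper (a permutation argument showing that any reduction can be rearranged into a ``standard'' one, with the leftmost-outermost strategy arising as the standard reduction to a normal form). Your proposal takes a genuinely different route through labelled reduction, but as sketched it has a gap at exactly the step you flag as ``bookkeeping.''

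Stage one is fine: combinatory rewriting with one rule per primitive is left-linear and has no critical pairs, so parallel reduction satisfies the diamond property and confluence follows. The problem is stage two. Strong normalisation of labelled reduction together with liftability of a \emph{particular} reduction $M \to^{*} N$ does not, by itself, yield that the \emph{leftmost-outermost} reduction of $M$ reaches a normal form. What you get from (i) is that the leftmost-outermost \emph{labelled} reduction terminates; what you get from (ii) is that the \emph{given} reduction $M\to^{*}N$ can be carried out under labels bounded by its length $k$. Neither says that the leftmost-outermost labelled reduction, started from the same labelling, stops at a term that is normal as an \emph{unlabelled} term rather than at a term containing frozen redexes. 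Choosing labels ``exceeding the length of the sequence'' only controls the reduction you were handed, not the leftmost-outermost one, whose length you do not know in advance. Closing this gap requires precisely the neededness/standardisation content you are trying to avoid: one must show that the leftmost-outermost redex is always a \emph{needed} redex (every reduction to normal form contracts one of its residuals), so that contracting it cannot lead the labelled reduction into a dead end. That fact is not a consequence of orthogonality alone — there are orthogonal systems (Gustave-style examples) where leftmost-outermost loops on normalising terms — but depends on the rules being \emph{left-normal}, i.e.\ all variables in the left-hand side occurring to the right of all function symbols, which happens to hold for $\comb{X}N_1\cdots N_m \to M$ but must be invoked explicitly. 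Once you have neededness of the leftmost-outermost redex, the labelled machinery becomes unnecessary for this particular statement, and you are essentially back to the Curry--Feys/Huet--L\'evy argument the citation points to.
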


Although usually stated in the $\S \K$-basis, the standardisation theorem easily generalises to every Turing-complete basis of combinators, e.g.~through the classical translation to $\lambda$-calculus (see, e.g.~\cite{BAR84}). Hence, we obtain the following result.

\begin{theorem}
Let $\B$ be a Turing-complete basis of primitive combinators. Then
\begin{equation*}
0 < \densityminus{\WN_\B}{\class{B}} \quad \text{and} \quad \densityplus{\WN_\B}{\class{B}} < 1.
\end{equation*}
\end{theorem}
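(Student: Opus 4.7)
The plan is to prove the two inequalities by exhibiting two subclasses of $\class{\B}$ of strictly positive asymptotic density --- one consisting entirely of weakly normalising combinators and one consisting entirely of non-normalising combinators. In both cases the subclass will be cut out by fixing the left subtree at the root of the combinator tree to a specific combinator $P$ and letting the right subtree vary freely over all of $\class{\B}$. By the same shift-of-size counting already used in the proof of \autoref{cor:subterm-prop}, the number of such combinators of size $n$ equals $\T{\B, n - |P| - 1}$, which by \autoref{th:TL(z)-asymptotic-approx} contributes an asymptotic fraction of ${(4d)}^{-(|P|+1)} > 0$ to $\T{\B, n}$.

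For the lower bound I would use Turing-completeness to pick $\overline{\comb{K}} \in \class{\B}$ and any primitive $\comb{X} \in \B$, and take $P := \overline{\comb{K}} \, \comb{X}$. Every combinator of the form $\overline{\comb{K}} \, \comb{X} \, M$ reduces via $\overline{\comb{K}} \, \comb{X} \, M \to^{*} \comb{X}$ to a primitive in normal form, so each member of this subclass is weakly normalising, yielding $\densityminus{\WN_\B}{\class{\B}} \geq {(4d)}^{-(|\overline{\comb{K}}|+2)} > 0$.

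For the upper bound I would fix $\overline{\S}, \overline{\comb{K}} \in \class{\B}$ and set $\overline{\comb{I}} := \overline{\S} \, \overline{\comb{K}} \, \overline{\comb{K}}$, $\overline{\omega} := \overline{\S} \, \overline{\comb{I}} \, \overline{\comb{I}}$, and $\overline{\Omega} := \overline{\omega} \, \overline{\omega}$. Applying the axiomatic reductions $\overline{\S} X Y Z \to^{*} X Z (Y Z)$ and $\overline{\comb{K}} X Y \to^{*} X$ one computes directly that $\overline{\Omega} \to^{*} \overline{\Omega}$ in a positive number of steps, mirroring the classical head-reduction loop of $\omega \omega$ in the $\S \comb{K}$-basis, and in particular $\overline{\Omega}$ reaches no weak head normal form. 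Consequently, for every $M \in \class{\B}$ the leftmost-outermost reduction of $\overline{\Omega} \, M$ is trapped inside the head $\overline{\Omega}$ and never absorbs $M$, so by the Standardisation Theorem~\ref{th:standardization-theorem} the combinator $\overline{\Omega} \, M$ has no normal form. Taking $P := \overline{\Omega}$ then produces a positive-density subclass of non-normalising combinators, and $\densityplus{\WN_\B}{\class{\B}} \leq 1 - {(4d)}^{-(|\overline{\Omega}|+1)} < 1$ follows.

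The only delicate point is the claim that, in an arbitrary Turing-complete basis $\B$, no reduction strategy at all can escape the $\overline{\Omega}$-loop and reach a normal form of $\overline{\Omega} \, M$. In the $\S \comb{K}$-basis this is the textbook non-normalisation of $\omega \omega$, and for a general $\B$ it can be handled either by verifying that the displayed loop really coincides with leftmost-outermost reduction (so that \autoref{th:standardization-theorem} directly rules out any normalising strategy) or by transferring non-normalisation from the $\S \comb{K}$-basis via the extensional equivalence guaranteed by Turing-completeness together with confluence.
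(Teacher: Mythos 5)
Your proof is correct and follows essentially the same strategy as the paper's: exhibit a positive-density subclass of normalising terms (via $\overline{\comb{K}}\comb{X}M$) and a positive-density subclass of non-normalising terms (via an unsolvable seed placed on the leftmost spine), with the standardisation theorem and a shift-of-size Catalan calculation carrying both bounds. The one structural difference is in the upper bound: you fix $\overline{\Omega}$ as the entire left subtree at the root and let only the right subtree $M$ vary, whereas the paper instead defines a transformation $\Phi$ that grafts $\omega\omega$ at the \emph{leftmost leaf} of an arbitrary combinator tree and takes the image $U = \Phi(\class{\B})$; this yields a somewhat larger witnessing set (and requires noting that $\Phi$ is $d$-to-one), but both constructions give a strictly positive density and both appeal to the same leftmost-outermost trapping argument via standardisation. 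Your lower-bound witness also fixes a single primitive $\comb{X}$ rather than letting it range over all $d$ primitives as the paper does, costing a factor of $d$ in the constant; again this is immaterial for the inequality. The ``delicate point'' you flag at the end — why $\overline{\Omega}$ genuinely has no normal form under an arbitrary Turing-complete basis, given that extensional equivalence alone does not prescribe step-by-step reduction behaviour — is a real subtlety; the paper elides it entirely (``Immediately, nor does $\omega\omega$''), so you are being more careful than the source rather than missing something it resolves.
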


\begin{proof}
Let us start with the lower bound. Since $\B$ is Turing-complete, there exists a combinator $\overline{\comb{K}} \in \class{\B}$ extensionally equivalent to $\comb{K}$. Let us consider the set $L$ of combinators in form of $\overline{\comb{K}} \comb{X} M$ where $\comb{X} \in \B$ is a primitive combinator and $M \in \class{\B}$. Notice that if $t \in L$, then $t$ has a normal form. Hence $L \subset \WN_\B$. Let us fix $p := |\overline{\comb{K}}|$. Then
\begin{eqnarray*}
\density{L}{\class{\B}} &=& \lim_{n \to \infty} \frac{|L_n|}{|\class{\B,n}|} = \lim_{n \to \infty} \frac{d \cdot |\class{\B,n-p-2}|}{|\class{\B,n}|}\\
&=& \lim_{n \to \infty} \frac{d^{n-p} \cdot C_{n-p-2}}{d^{n+1} \cdot C_n} =
\frac{1}{d^{p+1} \cdot 4^{p+2}} > 0.
\end{eqnarray*}
Naturally we have
\begin{equation*}
\density{L}{\class{\B}} \leq \densityminus{\WN_\B}{\class{\B}}
\end{equation*}
hence indeed, the lower bound holds.

Now, let us consider the upper bound. Let $\omega = \overline{\comb{S} \comb{I} \comb{I}}$. Since $\comb{S} \comb{I} \comb{I} x \to_w x x$, we know that $\comb{S} \comb{I} \comb{I} (\comb{S} \comb{I} \comb{I})$ has no normal form. Immediately, nor does $\omega \omega$. Consider the transformation $\Phi \colon \class{\B} \to \class{\B}$ which for a given $\B$-combinator substitutes $\omega \omega$ for its leftmost primitive combinator $\comb{X}$ (see~\autoref{fig:U-transform}).

\begin{figure}[th!]
\resizebox{.75\linewidth}{!}{
\begin{subfigure}{.3\textwidth}
	  \centering
\begin{tikzpicture}[triangle/.style = {regular polygon, regular polygon sides=3 },
level1/.style ={level distance=1cm},]
\node [circle,draw,fill=black,scale=0.5] (a) {}
    child[level1] {node [circle,draw,fill=black,scale=0.5] (b) {}
      child[level1] {node {$\vdots$}
        child[level1] {node [label={[yshift=-8mm]$\comb{X}$},circle,draw,fill=black,scale=0.5] (d) {}}
        child[level1] {
      	node {$\vdots$}
      	}
      } 
      child[level1] {
      node {$\vdots$}
      }
    }
    child[level1] {
      node {$\vdots$}
      };
\end{tikzpicture}
\end{subfigure}
\begin{subfigure}{.2\textwidth}
\[ \mathrel{\overset{\makebox[0pt]{\mbox{$\Phi$}}}{\longmapsto}} \]
\end{subfigure}
\begin{subfigure}{.3\textwidth}
	  \centering
	   \tikzset{
itria/.style={
  draw,shape border uses incircle, scale=0.75,
  isosceles triangle,shape border rotate=90,yshift=-1.45cm}
  }	  
	  
	  \begin{tikzpicture}[triangle/.style = {regular polygon, regular polygon sides=3 },
level1/.style ={level distance=1cm},]
\node [circle,draw,fill=black,scale=0.5] (a) {}
    child[level1] {node [circle,draw,fill=black,scale=0.5] (b) {}
      child[level1] {node {$\vdots$}
        child[level1] { 
		node [circle,draw,fill=black,scale=0.5] (q) {}        
        node[itria] {$\comb{\omega \omega}$}} 
        child[level1] {
      	node {$\vdots$}
      	}
      } 
      child[level1] {
      node {$\vdots$}
      }
    }
    child[level1] {
      node {$\vdots$}
      };
\end{tikzpicture}
\end{subfigure}
}
\caption{Transformation $\Phi$}
\label{fig:U-transform}
\end{figure}
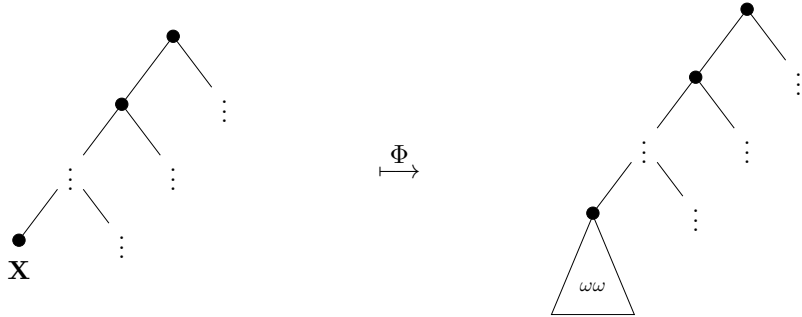

Let $U$ be the image of $\class{\B}$ through $\Phi$. By~\autoref{th:standardization-theorem}, we know that $U$ is a set of non-normalising combinators. In other words, we have $\WN_\B \subset \class{\B} \setminus U$. Let $M$ be an arbitrary combinator in $U$. Note that since there are $d$ primitive combinators, the map $\Phi$ sends exactly $d$ distinct combinators to $M$. For convenience, let us set $p := |\omega \omega|$. Then we obtain
\begin{eqnarray*}
\density{U}{\class{\B}} &=& \lim_{n \to \infty} \frac{|U_n|}{|\class{\B,n}|} = \lim_{n \to \infty} \frac{d \cdot |\class{\B,n-p}|}{|\class{\B,n}|}\\
&=& \lim_{n \to \infty} \frac{d^{n-p+2} \cdot C_{n-p}}{d^{n+1} \cdot C_n} =
\frac{d}{{(4 d)}^{p}} > 0.
\end{eqnarray*}
Since $\WN_\B \subset \class{\B} \setminus U$, we have
\begin{equation*}
\densityplus{\WN_\B}{\class{\B}} \leq 1 - \density{U}{\class{\B}},
\end{equation*}
and thus the upper bound holds as well, finishing the proof.
\end{proof}

Using the fact that asymptotically no $\B$-combinator is strongly normalising (see~\autoref{cor:subterm-prop}), we obtain the following corollary generalising the result in~\cite{Bendkowski2015}.

\begin{cor}
For each Turing-complete basis $\B$ of primitive combinators, asymptotically every weakly normalising $\B$-combinator is not strongly normalising.
\end{cor}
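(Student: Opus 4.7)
The plan is to combine the two ingredients that have just been assembled. On the one hand, \autoref{cor:subterm-prop} tells us that $\density{\SN_\B}{\class{\B}} = 0$, since the property ``contains $\overline{\comb{S}}\comb{I}\comb{I}(\overline{\comb{S}}\comb{I}\comb{I})$ as a subterm'' propagates to all superterms and therefore catches asymptotically almost every $\B$-combinator, and no combinator with such a subterm is strongly normalising. On the other hand, the immediately preceding theorem gives a strictly positive lower bound $\densityminus{\WN_\B}{\class{\B}} > 0$. The desired statement $\density{\SN_\B}{\WN_\B} = 0$ is then a one-step consequence of these two facts.

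Concretely, I would factorise the ratio as
\begin{equation*}
\frac{|\SN_{\B,n}|}{|\WN_{\B,n}|} = \frac{|\SN_{\B,n}|}{|\class{\B,n}|} \cdot \frac{|\class{\B,n}|}{|\WN_{\B,n}|},
\end{equation*}
and bound the two factors separately. By $\densityminus{\WN_\B}{\class{\B}} > 0$ there exist $c > 0$ and $N \in \nats$ such that $|\WN_{\B,n}| \geq c\,|\class{\B,n}|$ for every $n \geq N$, so the second factor is bounded above by $1/c$. Meanwhile $|\SN_{\B,n}|/|\class{\B,n}| \to 0$ by \autoref{cor:subterm-prop}, so the whole product tends to $0$. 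Since $\SN_\B \subseteq \WN_\B$, this is precisely the assertion that $\density{\WN_\B \setminus \SN_\B}{\WN_\B} = 1$, i.e.\ asymptotically every weakly normalising $\B$-combinator is not strongly normalising.

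There is really no obstacle to overcome here beyond making sure both hypotheses are invoked correctly; the argument uses no further generating-function machinery. The only point that warrants a brief sanity check is the $\B$-independence: \autoref{cor:subterm-prop} already delivers $\density{\SN_\B}{\class{\B}} = 0$ uniformly for every Turing-complete $\B$, and the lower bound on $\densityminus{\WN_\B}{\class{\B}}$ was shown from a copy of $\overline{\comb{K}}$, which exists in any such $\B$. Hence the final corollary inherits the same generality without extra work.
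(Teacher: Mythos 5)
Your proof is correct and takes essentially the same approach as the paper: the paper simply cites \autoref{cor:subterm-prop} (asymptotically no $\B$-combinator is strongly normalising) together with the positive lower bound on $\densityminus{\WN_\B}{\class{\B}}$ from the immediately preceding theorem, which is precisely the two ingredients you combine. Your factorisation of the ratio and the resulting squeeze argument is just a careful spelling-out of what the paper leaves implicit.
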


\section{SK-combinators}\label{sec:sk-combinators}
In this section we address the problem of estimating the asymptotic density of normalising $\comb{S}\comb{K}$-combinators in the set of all $\comb{S}\comb{K}$-combinators. In~\cite{Bendkowski2015} authors provided the following bounds:
\begin{equation*}
\frac{1}{32} \leq \densityminus{\WN_{\comb{S}\comb{K}}}{\class{\comb{S}\comb{K}}} \quad \text{and} \quad \densityplus{\WN_{\comb{S}\comb{K}}}{\class{\comb{S}\comb{K}}} \leq 1 - \frac{1}{2^{18}}.
\end{equation*}

Here, we prove that $\comb{S}\comb{K}$-combinators reducing in exactly $n>0$ normal-order reduction steps (leftmost outermost redex first, see~\autoref{th:standardization-theorem}) have positive asymptotic density in the set of all $\comb{S}\comb{K}$-combinators. We provide a constructive method of finding their asymptotic approximations and related densities, yielding a systematic approach to improving the above lower bound. For simplicity, we use $\class{\comb{S}\comb{K}}$ and $\rg{0}$ to denote the set of $\comb{S}\comb{K}$-combinators and the set of normal forms, respectively. Let us start with a few technical propositions regarding the generating functions $C(z)$ and $\rgf{0}$.

\begin{prop}[see, e.g.~\cite{Bendkowski2015}]
The generating function $C(z)$ enumerating $\S \K$-terms and its corresponding dominating singularity $\domsing{C}$ are given by
\begin{equation}\label{eq:c(z)}
C(z) = \frac{1-\sqrt{1-8z}}{2z} \quad \text{and} \quad \domsing{C} = \frac{1}{8}.
\end{equation}
\end{prop}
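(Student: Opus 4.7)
The plan is to recognise this proposition as the instantiation $d = 2$ of the general machinery for $L$-trees developed in \autoref{sec:basis-independent-results}. Under the natural bijection between $\S\K$-combinators and $L$-trees with $L = \set{\S, \K}$, the size notion (number of binary application nodes) coincides on both sides, so $C(z) = \T{L}(z)$ for this particular $L$. In other words, the entire statement reduces to an arithmetic substitution into previously derived formulas.

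Substituting $d = 2$ directly into the closed form \eqref{eq:TL(z)-closed-form-solution} yields
\begin{equation*}
C(z) = \frac{1 - \sqrt{1 - 8z}}{2z},
\end{equation*}
which is the claimed expression. Alternatively, one could derive this independently: the specification $\class{\S\K} = \set{\S} + \set{\K} + \class{\S\K} \times \class{\S\K}$, with application as the binary constructor of unit size, translates into the functional equation $C(z) = 2 + z\, C(z)^{2}$. Solving this quadratic gives two branches $C(z) = (1 \pm \sqrt{1-8z})/(2z)$, and the initial condition $\lim_{z \to 0} C(z) = 2$ (there are exactly two combinators of size $0$, namely $\S$ and $\K$) selects the branch with the negative sign.

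For the dominating singularity, the only candidate visible in the closed form is the point where $1 - 8z = 0$, i.e.~$z = \nicefrac{1}{8}$; the apparent pole at $z = 0$ is removable by \autoref{th:riemann-rem-sing} since the numerator vanishes there as well. That this candidate is genuinely the unique singularity of smallest modulus is immediate from \hyperref[th:pringsheim]{Pringsheim's Theorem} applied to $C(z)$, whose Taylor coefficients (the counts of $\S\K$-combinators) are non-negative integers, so the dominant singularity must lie on the positive real axis. This is exactly the specialisation of \autoref{prop:T(z)-rho-only-sing} to $d = 2$. Hence $\domsing{C} = \nicefrac{1}{8}$.

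Since the statement is a direct corollary of already-established results, there is no substantive technical obstacle: the argument consists entirely of substituting $d = 2$ and verifying the correct branch of the quadratic via the initial value. If one preferred a self-contained proof, the only step requiring a little care would be justifying the removability of the singularity at $z = 0$, which is an immediate application of Riemann's theorem.
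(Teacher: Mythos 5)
Your proof is correct, and it matches the way the paper itself would establish this result: the paper cites \cite{Bendkowski2015} rather than re-proving this proposition, but it is indeed the $d=2$ specialisation of \eqref{eq:TL(z)-closed-form-solution} together with \autoref{prop:T(z)-rho-only-sing}, both of which the paper proves by exactly the argument you give (symbolic specification, quadratic, branch selection via $\lim_{z\to 0}$, and Pringsheim plus removability at the origin). No gap.
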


\begin{prop}\label{prop:c(z)-properties}
Let $n \geq 1$. Then ${C(z)}^n = \sqrt{1-8z}\, P(z) + Q(z)$ for some rational functions $P(z)$ and $Q(z)$ analytic in $\mathbb{C} \setminus \set{0}$. Moreover, ${C(z)}^n$ has a single removable singularity at $z = 0$ in the disk $|z| < \domsing{C}$.
\end{prop}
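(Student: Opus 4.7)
The plan is to establish both claims by working with the closed-form of $C(z)$ from~\eqref{eq:c(z)} together with the ring structure supplied by~\autoref{lem:sqrt-commutative-ring}. First I would observe that $C(z)$ itself already sits in the ring $F$ of~\autoref{lem:sqrt-commutative-ring} (instantiated with $\zeta = \domsing{C} = \tfrac{1}{8}$), since
\begin{equation*}
C(z) = \sqrt{1-8z}\cdot\left(-\tfrac{1}{2z}\right) + \tfrac{1}{2z},
\end{equation*}
and both $-\tfrac{1}{2z}$ and $\tfrac{1}{2z}$ are rational functions analytic in $\mathbb{C}\setminus\set{0}$. Since $F$ is closed under multiplication, a straightforward induction on $n$ yields ${C(z)}^n = \sqrt{1-8z}\,P(z) + Q(z)$ with $P(z)$ and $Q(z)$ analytic in $\mathbb{C}\setminus\set{0}$.

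To upgrade ``analytic in $\mathbb{C}\setminus\set{0}$'' to ``rational'', I would trace through the multiplication formula~\eqref{eq:sqrt-mult-eq}: the new $P$ and $Q$ are obtained by sums, products, and multiplication by $(1-\zeta^{-1}z)$ of the previous components, all of which preserve rationality. Alternatively, one can make the decomposition fully explicit. Writing $u = \sqrt{1-8z}$ and expanding $(1-u)^n$ binomially, even powers of $u$ collapse to polynomials in $z$ via $u^2 = 1-8z$, while odd powers yield $u$ times a polynomial. This produces
\begin{equation*}
{C(z)}^n = \frac{A_n(z)}{(2z)^n} - \sqrt{1-8z}\,\frac{B_n(z)}{(2z)^n},
\end{equation*}
with $A_n(z) = \sum_{j}\binom{n}{2j}(1-8z)^j$ and $B_n(z) = \sum_{j}\binom{n}{2j+1}(1-8z)^j$ polynomials in $z$, exhibiting $P$ and $Q$ as rational functions whose only poles lie at the origin.

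For the singularity claim I would argue as follows. In the open disk $|z| < \domsing{C}$ the principal branch of $\sqrt{1-8z}$ is analytic, so the only candidate singularity of $C(z)$ inside that disk is $z = 0$; consequently ${C(z)}^n$ has no singularity other than possibly $z = 0$ within $|z| < \domsing{C}$. Since $C(z) \to 2$ as $z \to 0$ by Taylor expansion of the numerator, we obtain $\lim_{z\to 0} z\,C(z) = 0$, so~\autoref{th:riemann-rem-sing} supplies an analytic extension of $C(z)$ through the origin. Invoking~\autoref{lem:sing-powers} then transfers this extension to ${C(z)}^n$ for each $n\geq 2$, while for $n=1$ the claim reduces to the already-established extendability of $C(z)$ itself. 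Hence ${C(z)}^n$ has a single, removable singularity at $z=0$ within $|z|<\domsing{C}$.

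I do not anticipate any real obstacle here: the content reduces to bookkeeping through the ring structure of~\autoref{lem:sqrt-commutative-ring} together with a direct application of Riemann's theorem. The only minor subtlety is that~\autoref{lem:sqrt-commutative-ring} is stated for arbitrary analytic $P$ and $Q$ rather than specifically rational ones, so one must either track rationality across the inductive step or, as above, bypass the lemma entirely via the explicit binomial computation.
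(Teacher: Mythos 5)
Your proof is correct and takes essentially the same approach as the paper: write $C(z) = \sqrt{1-8z}\,(-\tfrac{1}{2z}) + \tfrac{1}{2z}$, invoke the ring structure of~\autoref{lem:sqrt-commutative-ring} (tracking rationality through~\eqref{eq:sqrt-mult-eq}), and finish with~\autoref{th:riemann-rem-sing} and~\autoref{lem:sing-powers}. The explicit binomial expansion you give as an alternative is a nice self-contained way to establish rationality, but the paper handles that point exactly as in your primary argument.
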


\begin{proof}
From equation~\eqref{eq:c(z)}, $C(z)$ can be rewritten as \[ C(z) = \sqrt{1-8z}\, P(z) + Q(z) \quad \text{where} \quad P(z) = -\frac{1}{2z} \quad \text{and} \quad Q(z) = \frac{1}{2z}. \]
Both $P(z)$ and $Q(z)$ are rational and hence also analytic in the complex plane except the origin. From~\autoref{lem:sqrt-commutative-ring}, ${C(z)}^n$ can be expressed as
\[ {C(z)}^n = \sqrt{1-8z}\, \overline{P}(z) + \overline{Q}(z) \]
for some $\overline{P}(z)$ and $\overline{Q}(z)$ analytic in $\mathbb{C} \setminus \set{0}$. Moreover, following~\eqref{eq:sqrt-mult-eq} and the closure properties of rational functions, it is clear that $\overline{P}(z)$ and $\overline{Q}(z)$ are also rational. As $\lim_{z \to 0} z C(z) = 0$, \autoref{th:riemann-rem-sing} guarantees that $C(z)$ has an analytic extension at $z = 0$ and, in consequence of~\autoref{lem:sing-powers}, so does ${C(z)}^n$, finishing the proof.
\end{proof}

\begin{prop}[see~\cite{Bendkowski2015}]
The generating function $\rgf{0}$ enumerating $\S \K$-terms in normal form and its corresponding dominating singularity $\domsing{0}$ are given by
\begin{equation}\label{eq:r0(z)}
 \rgf{0} = \frac{1 - 2z - \sqrt{1-4z-4z^2}}{2z^2} \quad \text{and} \quad \domsing{0} = \frac{1}{2}\left(\sqrt{2}-1\right) \approx 0.207107.
 \end{equation}
\end{prop}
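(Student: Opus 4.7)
The plan is to derive the closed form by giving a combinatorial specification of $\S\K$-normal forms, translating it into a quadratic equation for $\rgf{0}$, solving it, selecting the correct branch from the initial condition, and finally locating the dominant singularity by analysing the discriminant.

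First I would characterise $\S\K$-normal forms structurally. Every normal form can be written uniquely as a head primitive $\comb{X} \in \set{\comb{S},\comb{K}}$ applied to $k$ normal-form arguments $N_1,\ldots,N_k$, with $k$ small enough that no head redex is formed: for $\comb{X}=\comb{K}$ we need $k \leq 1$, and for $\comb{X}=\comb{S}$ we need $k \leq 2$. Since in the $\S\K$-basis primitives contribute $0$ to the size and each application contributes $1$, the combinatorial specification translates into the generating function equation
\begin{equation*}
\rgf{0} \;=\; \bigl(1 + z\,\rgf{0}\bigr) \;+\; \bigl(1 + z\,\rgf{0} + z^2\,\rgf{0}^2\bigr) \;=\; 2 + 2z\,\rgf{0} + z^2\,\rgf{0}^2,
\end{equation*}
where the first summand covers the $\comb{K}$-headed forms and the second the $\comb{S}$-headed ones.

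Next I would solve the resulting quadratic $z^2 \rgf{0}^2 + (2z-1)\rgf{0} + 2 = 0$ via the standard formula, obtaining
\begin{equation*}
\rgf{0} \;=\; \frac{1 - 2z \pm \sqrt{(1-2z)^2 - 8z^2}}{2z^2} \;=\; \frac{1 - 2z \pm \sqrt{1-4z-4z^2}}{2z^2}.
\end{equation*}
To select the correct branch I would use the boundary condition $\lim_{z \to 0^+} \rgf{0} = 2$, reflecting the two size-$0$ normal forms $\comb{S}$ and $\comb{K}$; the $+$ branch blows up as $z\to 0$, so the $-$ branch is the right one, giving exactly~\eqref{eq:r0(z)}.

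Finally, for the dominant singularity, I would observe that $\rgf{0}$ has a removable singularity at $z=0$ (via~\autoref{th:riemann-rem-sing}, since $\lim_{z \to 0} z \rgf{0} = 0$), so the only candidates for dominating singularities are the real roots of the radicand $1 - 4z - 4z^2$. Solving this quadratic yields $z = \tfrac{1}{2}(\sqrt{2}-1)$ and $z = -\tfrac{1}{2}(\sqrt{2}+1)$; the positive root is the smaller in modulus, and since $\rgf{0}$ has non-negative coefficients, \hyperref[th:pringsheim]{Pringsheim's Theorem} confirms it is indeed a dominant singularity on the circle of convergence, yielding $\domsing{0} = \tfrac{1}{2}(\sqrt{2}-1)$. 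The main obstacle in this proof is really only the initial combinatorial bookkeeping—correctly ruling out redex-forming applications at the head while allowing arbitrary normal-form subterms elsewhere—since once the specification is right the remaining steps are routine algebra.
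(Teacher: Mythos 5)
Your proof is correct, and since the paper only cites this proposition from prior work without giving an in-text proof, the comparison is against the standard derivation: you reproduce it faithfully. The head-normal-form decomposition ($\comb{K}$ with at most one argument, $\comb{S}$ with at most two, all arguments themselves normal) is exactly the right unambiguous specification, the resulting quadratic $z^2\rgf{0}^2 + (2z-1)\rgf{0} + 2 = 0$ matches, the branch is correctly selected via $\rgf{0}(0)=2$, and the dominant singularity is correctly identified as the smaller-modulus root of the discriminant $1-4z-4z^2$ (one might add that the coefficient of the square root, $-\tfrac{1}{2z^2}$, is nonzero at $\domsing{0}$, so the branch point is genuinely a singularity and not cancelled). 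Nothing to fix.
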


\begin{prop}\label{prop:r0(z)-properties}
Let $n \geq 1$. Then ${R_0(z)}^n = \sqrt{1-4z-4z^2}\, P(z) + Q(z)$ for some rational functions $P(z)$ and $Q(z)$ analytic in $\mathbb{C} \setminus \set{0}$. Moreover, ${R_0(z)}^n$ has a single removable singularity at $z = 0$ in the disk $|z| < \domsing{0}$.
\end{prop}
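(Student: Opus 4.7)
The plan is to mimic the proof of \autoref{prop:c(z)-properties} step by step, with the radical $\sqrt{1-8z}$ replaced by $\sqrt{1-4z-4z^2}$ and the rational pre-factors adjusted to match \eqref{eq:r0(z)}. I would first record the base case $n=1$: rewriting \eqref{eq:r0(z)} gives $\rgf{0} = \sqrt{1-4z-4z^2}\, P_1(z) + Q_1(z)$ with $P_1(z) = -\frac{1}{2z^2}$ and $Q_1(z) = \frac{1-2z}{2z^2}$, both rational and analytic in $\mathbb{C} \setminus \set{0}$.

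Next I would proceed by induction on $n$. Assuming $\rgf{0}^n = \sqrt{1-4z-4z^2}\, P_n(z) + Q_n(z)$ with $P_n, Q_n$ rational and analytic in $\mathbb{C} \setminus \set{0}$, expanding $\rgf{0}^{n+1} = \rgf{0}^n \cdot \rgf{0}$ yields
\begin{equation*}
\rgf{0}^{n+1} = \sqrt{1-4z-4z^2}\,\bigl(P_n Q_1 + Q_n P_1\bigr) + (1-4z-4z^2)\, P_n P_1 + Q_n Q_1,
\end{equation*}
so the new coefficients are again rational and analytic in $\mathbb{C} \setminus \set{0}$, since rational functions with all poles at the origin are closed under sums and products. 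This is essentially the content of \autoref{lem:sqrt-commutative-ring} together with identity \eqref{eq:sqrt-mult-eq}; the only reason for not citing the lemma verbatim is that its statement fixes the radical to the linear form $\sqrt{1 - \zeta^{-1} z}$, whereas here the radicand is quadratic. The underlying ring calculation is however identical, so performing it inline as above avoids any need for a separate generalisation.

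For the removable singularity claim, I would first check that $\rgf{0}$ itself has a removable singularity at the origin. A short Maclaurin expansion gives $\sqrt{1-4z-4z^2} = 1 - 2z - 4z^2 + O(z^3)$, so the numerator in \eqref{eq:r0(z)} equals $4z^2 + O(z^3)$ and $\lim_{z \to 0} z\, \rgf{0} = 0$. \autoref{th:riemann-rem-sing} then supplies an analytic extension of $\rgf{0}$ at $z = 0$, and \autoref{lem:sing-powers} lifts this to every power $\rgf{0}^n$ with $n \geq 2$. Analyticity on the punctured disk $0 < |z| < \domsing{0}$ is immediate, because $\domsing{0}$ is by construction the smallest positive root of $1 - 4z - 4z^2$, so the radicand does not vanish in $|z| < \domsing{0}$, while the rational pre-factors have their only pole at the origin, which has just been removed.

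I do not expect a serious obstacle: the argument is a direct analogue of \autoref{prop:c(z)-properties}, and the only mild bookkeeping point is the slight mismatch between \autoref{lem:sqrt-commutative-ring} and the quadratic radicand, which is handled cleanly by the inline ring multiplication sketched above.
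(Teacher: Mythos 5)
Your proof is correct and follows essentially the same route as the paper's, which simply rewrites $\rgf{0}$ in the form $\sqrt{1-4z-4z^2}\,P(z)+Q(z)$ with $P(z)=-\tfrac{1}{2z^2}$, $Q(z)=\tfrac{1-2z}{2z^2}$ and then says ``the result follows from the same arguments as in \autoref{prop:c(z)-properties}'', i.e.\ the ring structure of \autoref{lem:sqrt-commutative-ring} plus \autoref{th:riemann-rem-sing} and \autoref{lem:sing-powers}. You are in fact slightly more careful than the paper: you notice that \autoref{lem:sqrt-commutative-ring} is stated for a linear radicand $\sqrt{1-\zeta^{-1}z}$, so you redo the ring multiplication inline for the quadratic radicand rather than citing the lemma verbatim -- a small imprecision in the paper's appeal that your version patches cleanly without changing the substance of the argument.
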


\begin{proof}
From the shape of equation~\eqref{eq:r0(z)} we can write \[ R_0(z) = \sqrt{1-4z-4z^2}\, P(z) + Q(z) \quad \text{where} \quad P(z) = -\frac{1}{2z^2} \quad \text{and} \quad Q(z) = \frac{1-2z}{2z^2}. \]
Clearly, both $P(z)$ and $Q(z)$ are rational and analytic in $\mathbb{C} \setminus \set{0}$. The result follows now from the same arguments as in~\autoref{prop:c(z)-properties}.
\end{proof}

Our method relies on the effective computation and asymptotic properties of \emph{normal-order reduction grammars} $\set{\rg{n}}_{n\in\nats}$. In~\cite{Bendkowski16} the author provided a recursive algorithm, which for given $n \geq 1$, constructs the $n$th normal-order reduction grammar $\rg{n}$
defining the set of $\S \K$-combinators reducing in exactly $n$ normal-order reduction steps. Applying the Symbolic Method of Flajolet and Sedgewick~\cite{FlajoletSedgewick2009}, $\rg{n}$ is then translated into a functional equation
\begin{equation}\label{eq:rgf-gen-fun-eq-psi}
\rgf{n} = \Psi(\rgf{n},z)
\end{equation}
involving the generating function $\rgf{n}$ and its formal parameter $z$. Due to the specific structure of $\Phi(\rg{n})$ -- the set of productions $\alpha \in \rg{n}$ not referencing $\rg{n}$ --~\eqref{eq:rgf-gen-fun-eq-psi} turns out to be linear in $\rgf{n}$, yielding a unique closed-form solution.

\begin{theorem}[see~\cite{Bendkowski16}]\label{th:reduction-grammars}
Let $n \geq 0$. Then there exists a computable unambiguous regular tree grammar $\rg{n}$ defining the set of $\S \K$-combinators reducing in exactly $n$ normal-order reduction steps. Moreover, $\rg{n}$ has a computable generating function $\rgf{n}$ of the following closed-form solution:
\begin{equation}\label{eq:rn(z)-fun-eq-simplified}
\rgf{n} = \frac{1}{\sqrt{1-4 z - 4 z^2}} \sum_{\alpha \in \Phi(\rg{n})} R_{\alpha}(z),
\end{equation}
where
\begin{equation}\label{eq:ralpha-fun-eq}
R_{\alpha}(z) = z^{k(\alpha)} {C(z)}^{c(\alpha)} \prod_{i=0}^{n-1} {R_i(z)}^{r_i(\alpha)}
\end{equation}
and $k(\alpha)$, $c(\alpha)$, $r_i(\alpha)$ are some non-negative integers depending on $\alpha$.
\end{theorem}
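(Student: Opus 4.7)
My plan is to prove the theorem by strong induction on $n$, constructing the grammar $\rg{n}$ from $\rg{0}, \dots, \rg{n-1}$ together with $\class{\S\K}$ by pre-image analysis under normal-order reduction, and then translating to a functional equation via the Symbolic Method of~\cite{FlajoletSedgewick2009}. The base case $n=0$ is classical: $\rg{0}$ is the set of $\S\K$-normal forms, described by an unambiguous regular tree grammar that forbids $\K M_1 M_2$ and $\S M_1 M_2 M_3$ as subterms, with generating function~\eqref{eq:r0(z)}.

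For the inductive step, fix $n \geq 1$ and decompose a candidate $T = \comb{X} T_1 \cdots T_k$ with $\comb{X} \in \set{\S, \K}$. Exactly one of the following happens: either the head forms a redex (when $\comb{X} = \K$ with $k \geq 2$ or $\comb{X} = \S$ with $k \geq 3$) and $T$ reduces at the top, or the head is deficient and the leftmost-outermost redex sits in the first redex-bearing subterm $T_i$, forcing $T_1, \dots, T_{i-1} \in \rg{0}$. In the top-reduction case one writes the reduct explicitly via the $\K$ or $\S$ rule and imposes the constraint that it lies in $\rg{n-1}$; in the descent case $T_i$ itself must belong to some $\rg{m}$ with $m \geq 1$, and the reduction count of $T$ is controlled by how the transparent prefix $\comb{X} T_1 \cdots T_{i-1}$ interacts with the reduct of $T_i$. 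A finite case analysis converts this description into finitely many production schemes for $\rg{n}$, some of which reference $\rg{n}$ itself --- precisely those corresponding to descent into an active subterm preceded by a normal-form prefix under the same deficient head. Unambiguity is automatic from the uniqueness of the leftmost-outermost redex.

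Crucially, each production contains at most one occurrence of $\rg{n}$, so the Symbolic Method yields a \emph{linear} equation $\rgf{n} = A(z) + B(z) \rgf{n}$. Its solution is $\rgf{n} = A(z)/(1 - B(z))$, and one identifies $A(z) = \sum_{\alpha \in \Phi(\rg{n})} R_{\alpha}(z)$, with each $R_\alpha(z)$ of the factorised shape~\eqref{eq:ralpha-fun-eq} tracking the number of applications ($z^{k(\alpha)}$), free argument slots ($C(z)^{c(\alpha)}$), and lower-depth subterms ($R_i(z)^{r_i(\alpha)}$) appearing in $\alpha$. Matching with~\eqref{eq:rn(z)-fun-eq-simplified} then reduces the theorem to the algebraic identity $1 - B(z) = \sqrt{1 - 4 z - 4 z^2}$.

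The main obstacle is precisely this identity: it requires an exact combinatorial derivation of $B(z)$ from the self-referencing productions and recognising the resulting closed form as the discriminant of $R_0(z)$. A secondary technical issue is treating the duplication of $T_3$ in the $\S$-reduct $T_1 T_3 (T_2 T_3) T_4 \cdots T_k$ within the regular-tree-grammar framework; this is handled by introducing auxiliary synchronising nonterminals that correlate the two occurrences of $T_3$, enlarging the grammar but preserving regularity and unambiguity. Both steps are mechanical but combinatorially delicate, and together they supply what is needed to complete the induction and deliver the announced closed form.
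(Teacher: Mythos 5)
The paper does not prove this theorem; it is imported wholesale from~\cite{Bendkowski16} (the ``see'' in the theorem header is a citation, not a pointer to a proof in the present text). So there is no internal proof to compare against, and your proposal has to stand or fall on its own merits. It does not quite stand, for two reasons, one fatal and one acknowledged.

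The fatal one is your treatment of the $\S$-duplication. You propose to handle the two occurrences of $T_3$ in the reduct $T_1 T_3 (T_2 T_3)$ by ``introducing auxiliary synchronising nonterminals that correlate the two occurrences of $T_3$, enlarging the grammar but preserving regularity and unambiguity.'' Regular tree grammars, by definition, generate the two children of a node independently once the nonterminals labelling them are fixed; there is no mechanism, auxiliary or otherwise, that can force two sibling subtrees to be \emph{equal} while remaining regular. Equality of subtrees is exactly the kind of context-sensitive constraint that separates regular tree languages from larger classes, and the moment you add it you lose the algebraic generating-function machinery (the Symbolic Method translation, the closed form~\eqref{eq:ralpha-fun-eq}, the computability claim) that the theorem asserts. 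The construction in~\cite{Bendkowski16} does not synchronise anything: the grammar describes the \emph{redex} $\S T_1 T_2 T_3$, which contains a single copy of $T_3$, and the case analysis on the shape of $T_1$ (and then the arguments) is set up precisely so that the constraint ``the reduct lies in $\rg{n-1}$'' decomposes into \emph{independent} regular constraints on $T_1$, $T_2$, $T_3$ --- no equality requirement ever arises. Reproducing that decomposition is the actual content of the theorem, and it is missing from your argument.

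The second issue you flag yourself: the identity $1 - B(z) = \sqrt{1 - 4z - 4z^2}$ is asserted, not derived. This is not a matter of ``recognising a discriminant''; one has to show that for every $n \geq 1$ the self-referencing productions of $\rg{n}$ are structurally the same, so that $B(z)$ is independent of $n$, and then compute it. That stability of the self-referencing part across all $n$ is again a nontrivial structural fact about the grammar construction and needs its own inductive argument. As written, the proposal reduces the theorem to these two unresolved points, so it is an outline with the hard parts left open rather than a proof.
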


In the reminder of this section, we exploit the structure of the normal-order reduction grammars, showing the following main result.

\begin{theorem}\label{th:main}
Let $k \geq 1$. Then the asymptotic growth rate of $[z^n]\rgf{k}$ is given by
\[ [z^n]\rgf{k} \sim 8^n \frac{\overline{C}_k n^{-\nicefrac{3}{2}}}{\Gamma(-\nicefrac{1}{2})}, \]
where $\overline{C}_k$ is a constant depending on $k$.
\end{theorem}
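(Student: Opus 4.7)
The plan is to verify, by strong induction on $k$, that $\rgf{k}$ admits a decomposition of the form $\rgf{k} = \sqrt{1 - 8z}\, g_k(z) + h_k(z)$ with $g_k, h_k$ analytic in some disk $|z| < \frac{1}{8} + \eta_k$ (modulo a removable singularity at the origin), and then to apply~\autoref{th:singularity-analysis} directly with $\zeta = \domsing{C} = \frac{1}{8}$. The crucial numerical input is the strict inequality $\domsing{C} = \frac{1}{8} < \domsing{0} = \frac{\sqrt{2}-1}{2}$, which ensures that both $R_0(z)$ and the prefactor $\frac{1}{\sqrt{1-4z-4z^2}}$ are analytic in a disk strictly larger than $|z| \leq \frac{1}{8}$.

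Throughout the induction, I would work inside the commutative ring $F$ of~\autoref{lem:sqrt-commutative-ring} taken with $\zeta = \frac{1}{8}$. By~\autoref{prop:c(z)-properties}, every power $C(z)^{c}$ already lies in $F$. Every power $R_0(z)^{r}$ is analytic in a punctured disk containing $|z| \leq \frac{1}{8}$ (by~\autoref{prop:r0(z)-properties} together with the inequality above), so it lies in $F$ trivially with vanishing $\sqrt{1-8z}$ component. Assuming inductively that $R_j(z) \in F$ for every $1 \leq j < k$, \autoref{lem:sqrt-commutative-ring} gives $R_j(z)^{r_j(\alpha)} \in F$ as well. Using~\eqref{eq:rn(z)-fun-eq-simplified}, each summand $R_\alpha(z) = z^{k(\alpha)} C(z)^{c(\alpha)} \prod_{i=0}^{k-1} R_i(z)^{r_i(\alpha)}$ is a finite product of elements of $F$, and since $\Phi(\rg{k})$ is finite, so is $\sum_\alpha R_\alpha(z)$. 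Multiplying by $\frac{1}{\sqrt{1-4z-4z^2}}$, which is analytic at $\frac{1}{8}$, preserves membership in $F$. Hence $\rgf{k} \in F$, and~\autoref{th:singularity-analysis} yields the stated asymptotics with $\overline{C}_k = g_k\bigl(\tfrac{1}{8}\bigr)$.

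The main obstacle is the delicate interplay between the two distinct square-root singularities in play---the $\sqrt{1-8z}$ inherited from $C(z)$ and the $\sqrt{1-4z-4z^2}$ inherited from $R_0(z)$ and the overall prefactor---and the need to certify that $z = \frac{1}{8}$ is the unique dominant singularity of $\rgf{k}$. This is precisely what the strict separation $\frac{1}{8} < \domsing{0}$ buys us: each time a singularity at $\domsing{0}$ would threaten to appear, it sits strictly outside the circle of convergence dictated by $C(z)$. A secondary subtlety is keeping the analyticity radii bounded below as the induction progresses, but since $\Phi(\rg{k})$ is finite and each inductive radius $\eta_j$ is strictly positive, one may always take $\eta_k = \min_{j<k} \eta_j$ (intersected with the fixed radii coming from $C$ and $R_0$) and retain a positive common disk of analyticity.
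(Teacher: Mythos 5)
Your proposal takes essentially the same route as the paper: the paper packages your inductive decomposition $\rgf{k}=\sqrt{1-8z}\,g_k(z)+h_k(z)$ into \autoref{prop:form-of-rn(z)} (proved via \autoref{lem:sqrt-commutative-ring}, \autoref{prop:c(z)-properties}, \autoref{prop:r0(z)-properties} and the inequality $\tfrac{1}{8}<\tfrac{\sqrt{2}-1}{2}$, exactly as you argue), handles the removable singularity at the origin in \autoref{prop:sing-at-0}, and then invokes \autoref{th:singularity-analysis}. Your secondary worry about shrinking radii is in fact avoided automatically in the paper by working throughout with the fixed punctured disk $0<|z|<\zeta_0=\tfrac{\sqrt{2}-1}{2}$, which is the stable choice your ``intersect with the fixed radii from $C$ and $R_0$'' remark recovers.
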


Before we provide a proof, let us present two propositions preparing the background for~\autoref{th:singularity-analysis}.

\begin{prop}\label{prop:sing-at-0}
Let $n \geq 0$. Then each $\rgf{n}$ has a removable singularity at $z = 0$.
\end{prop}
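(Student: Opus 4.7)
The plan is to argue by strong induction on $n$, using the recursive closed-form solution from \autoref{th:reduction-grammars}.

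For the base case $n = 0$, the claim is precisely the first part of \autoref{prop:r0(z)-properties} applied with exponent $1$: the explicit expression \eqref{eq:r0(z)} has only $z^2$ in the denominator against a numerator vanishing to order at least $2$ at the origin, so Riemann's theorem (\autoref{th:riemann-rem-sing}) yields an analytic extension.

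For the inductive step, assume $R_0(z), \ldots, R_{n-1}(z)$ each admit analytic extensions at $z = 0$. By \autoref{th:reduction-grammars},
\[
R_n(z) \;=\; \frac{1}{\sqrt{1-4z-4z^2}} \sum_{\alpha \in \Phi(\rg{n})} z^{k(\alpha)}\, C(z)^{c(\alpha)} \prod_{i=0}^{n-1} R_i(z)^{r_i(\alpha)}.
\]
I would handle each factor of a summand separately. The prefactor $(1-4z-4z^2)^{-\nicefrac{1}{2}}$ is analytic in a neighbourhood of $0$ since $1-4z-4z^2$ equals $1$ at the origin. The monomial $z^{k(\alpha)}$ is entire. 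The factor $C(z)^{c(\alpha)}$ has an analytic extension at $z = 0$ by \autoref{prop:c(z)-properties} (or directly by \autoref{lem:sing-powers} applied to $C(z)$, whose extension at $0$ is guaranteed by $\lim_{z \to 0} z\, C(z) = 0$). For each factor $R_i(z)^{r_i(\alpha)}$ with $i < n$, the induction hypothesis together with \autoref{lem:sing-powers} supplies an analytic extension at $z = 0$ (the case $r_i(\alpha) \in \{0,1\}$ being immediate).

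Finally, analytic extensions at a point are preserved under finite sums and finite products, so each summand, the whole sum, and its product with the analytic prefactor all extend analytically at $z = 0$. Hence $R_n(z)$ has a removable singularity at the origin, completing the induction. The only subtlety worth flagging is that \autoref{lem:sing-powers} is stated for exponents $n \geq 2$, which is why one must treat the exponents $0$ and $1$ as trivial separate cases; beyond that, the argument is a routine book-keeping exercise using the closure properties of analytic functions.
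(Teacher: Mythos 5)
Your proof is correct and follows essentially the same inductive route as the paper: the paper also inducts on $n$ over the closed form~\eqref{eq:rn(z)-fun-eq-simplified}, invoking \autoref{prop:c(z)-properties}, \autoref{lem:sing-powers}, and \autoref{th:riemann-rem-sing} to conclude. The only cosmetic difference is that the paper phrases the conclusion via the limit criterion $\lim_{z\to 0} z\,\rgf{n} = 0$ rather than directly composing analytic extensions, and treats the base case $n=0$ through the same formula (with the empty product) instead of citing \autoref{prop:r0(z)-properties}; both variants are sound.
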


\begin{proof}
Induction over $n$. Following~\autoref{th:riemann-rem-sing}, $\rgf{n}$ has a removable singularity at $z = 0$ if and only if the limit $\lim_{z \to 0} z \rgf{n}$ exists and is equal to $0$. In particular, from~\eqref{eq:rn(z)-fun-eq-simplified} 
\begin{equation}\label{eq:rn(z)-rem-sing-form}
\lim_{z \to 0} \frac{z}{\sqrt{1-4 z - 4 z^2}} \bigg( z^{k(\alpha)} {C(z)}^{c(\alpha)} \prod_{i=0}^{n-1} {R_i(z)}^{r_i(\alpha)} \bigg) = 0
\end{equation}
for each $\alpha \in \Phi(\rg{n})$. 

Let us start with $n = 0$. In this case the product $\prod_{i=0}^{n-1} {R_i(z)}^{r_i(\alpha)}$ vanishes, simplifying~\eqref{eq:rn(z)-rem-sing-form} to
\begin{equation}\label{eq:rn(z)-rem-sing-form-2}
\lim_{z \to 0} \frac{z^{k(\alpha)+1} {C(z)}^{c(\alpha)}}{\sqrt{1-4 z - 4 z^2}} = 0.
\end{equation}
Due to~\autoref{prop:r0(z)-properties}, ${C(z)}^{c(\alpha)}$ has an analytic extension at $z = 0$. It follows that $\lim_{z \to 0} {C(z)}^{c(\alpha)}$ exists, indeed satisfying equation~\eqref{eq:rn(z)-rem-sing-form-2}.

Now, suppose that $n > 0$. Due to the induction hypothesis all $\rgf{0},\ldots,\rgf{n-1}$ have removable singularities at $z = 0$. Using~\autoref{lem:sing-powers}, we can moreover state that so do their powers ${\rgf{0}}^{r_0(\alpha)},\ldots,{\rgf{n-1}}^{r_{n-1}(\alpha)}$. Together with our previous observation that ${C(z)}^{c(\alpha)}$ has an analytic extension at $z = 0$, we conclude that~\eqref{eq:rn(z)-rem-sing-form} is satisfied, finishing the proof.
\end{proof}

\begin{definition}
Let $\alpha \in \Phi(\rg{n})$ for some $n \geq 1$. We say that $\alpha$ is \emph{major} if and only if $\alpha$ references either $\class{\comb{S} \comb{K}}$ or some $\rg{i}$ for $i \in \set{1,\ldots,n-1}$. Otherwise, we say that $\alpha$ is \emph{minor}. 
\end{definition}

In the following proposition we use the notions of major and minor productions, showing that major productions contribute to the asymptotic growth rate of $\rgf{n}$'s counting sequence, whereas minor ones are asymptotically negligible.

\begin{prop}\label{prop:form-of-rn(z)}
Let $n \geq 1$. Then each $\rgf{n}$ is in form of $\sqrt{1-8z}\, P(z) + Q(z)$ where both $P(z)$ and $Q(z)$ are analytic in the disk $|z| < \zeta_0 = \frac{\sqrt{2}-1}{2}$ but at $z = 0$.
\end{prop}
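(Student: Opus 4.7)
The plan is to combine the closed-form identity of \autoref{th:reduction-grammars} with the ring closure of \autoref{lem:sqrt-commutative-ring}. Specifically, I would apply \autoref{lem:sqrt-commutative-ring} with $\zeta = \frac{1}{8}$ and $\eta = \domsing{0} - \frac{1}{8} > 0$, obtaining the commutative ring $F$ of functions of the form $\sqrt{1-8z}\,P(z) + Q(z)$ with $P, Q$ analytic in $|z| < \domsing{0}$ except possibly at the origin. The proposition then reduces to the assertion $\rgf{n} \in F$, which I would prove by induction on $n \geq 1$.

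For the inductive step (the base $n=1$ being identical, except that the induction hypothesis is used vacuously since $\Phi(\rg{1})$ mentions only $\class{\S\K}$ and $\rg{0}$), I would take the explicit formula
\begin{equation*}
\rgf{n} \;=\; \frac{1}{\sqrt{1-4z-4z^2}}\sum_{\alpha\in\Phi(\rg{n})} z^{k(\alpha)}\,C(z)^{c(\alpha)}\,\prod_{i=0}^{n-1} R_i(z)^{r_i(\alpha)}
\end{equation*}
from \eqref{eq:rn(z)-fun-eq-simplified} and verify membership in $F$ factor by factor. The monomial $z^{k(\alpha)}$ lies in $F$ trivially with $P \equiv 0$. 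The factor $C(z)^{c(\alpha)}$ lies in $F$ directly by \autoref{prop:c(z)-properties}. The factor $R_0(z)^{r_0(\alpha)}$ lies in $F$ with $P \equiv 0$ by \autoref{prop:r0(z)-properties}, which is why the base radius must be $\domsing{0}$ rather than $\frac{1}{8}$. For each $1 \leq i \leq n-1$ the induction hypothesis gives $R_i(z) \in F$, so iterated ring multiplication places $R_i(z)^{r_i(\alpha)}$ in $F$. The outer prefactor $\frac{1}{\sqrt{1-4z-4z^2}}$ is analytic throughout $|z| < \domsing{0}$ — the polynomial $1-4z-4z^2$ has its smallest-modulus zero exactly at $\domsing{0}$ — so it too lies in $F$ with $P \equiv 0$. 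Closure of $F$ under finite sums and products then yields $\rgf{n} \in F$, as desired.

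The substantive work of the proof has been packaged into \autoref{lem:sqrt-commutative-ring}, \autoref{prop:c(z)-properties}, and \autoref{prop:r0(z)-properties}; what remains is essentially routine bookkeeping through the normal-order reduction grammar. The one quantitative point I would flag is the inequality $\domsing{0} > \frac{1}{8}$, which guarantees $\eta > 0$ and non-triviality of the ring $F$; this follows immediately from $\domsing{0} = \frac{\sqrt{2}-1}{2} \approx 0.207$. If there is any subtlety at all, it is keeping track of the two different square roots (one from $C(z)$ at $z=\frac{1}{8}$, one from $R_0(z)$ at $z=\domsing{0}$) and observing that only the former governs the dominant behaviour inside the disk of radius $\domsing{0}$, while the latter is absorbed harmlessly into the analytic components $P$ and $Q$.
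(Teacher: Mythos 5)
Your proof is correct and follows essentially the same route as the paper: induction on $n$, the closed-form identity~\eqref{eq:rn(z)-fun-eq-simplified}, and closure of the commutative ring from \autoref{lem:sqrt-commutative-ring} applied with $\zeta = \nicefrac{1}{8}$ and $\eta = \zeta_0 - \nicefrac{1}{8}$. The paper additionally partitions $\Phi(\rg{n})$ into major and minor productions before verifying each kind separately, but that is organisational bookkeeping; your uniform factor-by-factor membership check in $F$ (with the $\sqrt{1-4z-4z^2}$ pieces, including the prefactor $\nicefrac{1}{\sqrt{1-4z-4z^2}}$, correctly absorbed into the $Q$-component because their branch point sits on the boundary $|z| = \zeta_0$) reaches the same conclusion by the same means.
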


\begin{proof}
Induction over $n$. Consider the base case $n=1$. 
Let us divide $\Phi(\rg{1})$ into two groups, i.e.~major and minor productions. Suppose that $\alpha \in \rg{1}$ is a major production. Since $\alpha \in \Phi(\rg{1})$, its corresponding generating function~\eqref{eq:ralpha-fun-eq} $\rgf{\alpha}$ is in form of
\begin{equation}\label{eq:rn(z)-asymptotic-form-1}
\rgf{\alpha} = z^{k(\alpha)} {C(z)}^{c(\alpha)} {R_0(z)}^{r_0(\alpha)},
\end{equation}
where in addition $c(\alpha) \geq 1$. Using Propositions~\ref{prop:c(z)-properties} and~\ref{prop:r0(z)-properties}, we can further rewrite~\eqref{eq:rn(z)-asymptotic-form-1} as
\begin{equation*}
\rgf{\alpha} = \sqrt{1-8z}\, \overline{P}(z) + \overline{Q}(z)
\end{equation*}
for functions $\overline{P}(z)$ and $\overline{Q}(z)$ analytic in the disk $|z| < \zeta_0$ but at $z=0$. Similarly, if $\alpha \in \Phi(\rg{1})$ is minor, we can rewrite its generating function as
\begin{equation*}
\rgf{\alpha} = \sqrt{1-4z-4z^2}\, \widehat{P}(z) + \widehat{Q}(z)
\end{equation*}
where $\widehat{P}(z)$ and $\widehat{Q}(z)$ are analytic in some disk $|z| < \zeta_0 + \varepsilon$ for $\varepsilon > 0$ but at $z=0$. The requested form of $\rgf{1}$ follows now from~\autoref{lem:sqrt-commutative-ring} and the fact that $\zeta_1 < \zeta_0$.

Now, suppose that $n > 1$. Again, let us consider an arbitrary major $\alpha \in \Phi(\rg{n})$. Using the induction hypothesis and~\eqref{eq:ralpha-fun-eq}, we can rewrite $\rgf{\alpha}$ as
\begin{equation}\label{eq:rn(z)-asymptotic-form-2}
\rgf{\alpha} = z^{k(\alpha)} {C(z)}^{c(\alpha)} {\rgf{0}}^{r_0(\alpha)} \prod_{i=1}^{n-1} {\bigg(\sqrt{1-8z}\, \overline{P_i}(z) + \overline{Q_i}(z)\bigg)}^{r_i(\alpha)}.
\end{equation}

Using Propositions~\ref{prop:c(z)-properties} and~\ref{prop:r0(z)-properties}, we can further rewrite~\eqref{eq:rn(z)-asymptotic-form-2} as
\begin{equation}\label{eq:rn(z)-asymptotic-form-3}
\rgf{\alpha} = \bigg(\sqrt{1-8z}\, \overline{P}(z) + \overline{Q}(z)\bigg) \prod_{i=1}^{n-1} {\bigg(\sqrt{1-8z}\, \overline{\overline{P_i}}(z) + \overline{\overline{Q_i}}(z)\bigg)}^{r_i(\alpha)}.
\end{equation}
The result follows now easily from~\autoref{lem:sqrt-commutative-ring}.
\end{proof}

Now we are in a position to prove \autoref{th:main}.

\begin{proof}{(\autoref{th:main})}
Let $k>0$. Due to \autoref{prop:form-of-rn(z)}, every function $R_k(z)$ is in form of $\sqrt{1-8z}P_k(z) + Q_k(z)$ for some algebraic functions $P_k(z)$ and $Q_k(z)$ that are analytic in the disk $|z|<\frac{\sqrt{2}-1}{2}$ but at $z=0$. By \autoref{prop:sing-at-0}, every $R_k(z)$ has a removable singularity at $z=0$. Therefore, every function $R_k(z)$ satisfies the assumptions of \autoref{th:singularity-analysis}. Hence
\[ [z^n]\rgf{k} \sim 8^n \frac{\overline{C}_k n^{-\nicefrac{3}{2}}}{\Gamma(-\nicefrac{1}{2})}\]
for some constant $\overline{C}_k$.
\end{proof}

As $\domsing{m} = \nicefrac{1}{8}$ for every $m \geq 1$, we can easily compute the coefficients $\overline{C}_m$ in the asymptotic approximation of $[z^n]\rgf{m}$ using available computer algebra systems, e.g. Mathematica \textregistered~\cite{mathematicaSoft}. The quotient $\nicefrac{\overline{C}_m}{-4}$ (see~\autoref{eq:TL(z)-asymptotic-approx}) yields the desired asymptotic density of $\S \K$-combinators normalising in $m$ normal-order reduction steps in the set of all $\S \K$-combinators. Hence, we obtain the following corollary.

\begin{cor}
For each $m \in \nats$, finding the asymptotic density of combinators normalising in $m$ steps in the set of all $\S \K$-combinators is computable.
\end{cor}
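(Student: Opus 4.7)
The plan is to observe that all the ingredients assembled in the preceding development are effective, and then to trace through them to exhibit an algorithm computing the density.

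First, I would invoke \autoref{th:reduction-grammars} to obtain, given $m$, the closed-form generating function
\[ \rgf{m} = \frac{1}{\sqrt{1-4z-4z^2}} \sum_{\alpha \in \Phi(\rg{m})} z^{k(\alpha)} {C(z)}^{c(\alpha)} \prod_{i=0}^{m-1} {R_i(z)}^{r_i(\alpha)}. \]
Since $\rg{m}$ is produced by the recursive algorithm of~\cite{Bendkowski16}, the finite set $\Phi(\rg{m})$ together with the exponents $k(\alpha),c(\alpha),r_i(\alpha)$ is computable; by induction the same is true of $\rgf{0},\ldots,\rgf{m-1}$, so $\rgf{m}$ is available as an explicit algebraic expression in $z$, $\sqrt{1-4z-4z^2}$ and $\sqrt{1-8z}$.

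Next I would invoke \autoref{th:main}: $\rgf{m}$ is of the form $\sqrt{1-8z}\,P(z)+Q(z)$ with $P,Q$ analytic in a neighbourhood of $\zeta=\nicefrac{1}{8}$, and hence by \autoref{th:singularity-analysis}
\[ [z^n]\rgf{m} \sim 8^n \frac{\overline{C}_m\, n^{-\nicefrac{3}{2}}}{\Gamma(-\nicefrac{1}{2})}, \qquad \overline{C}_m = P(\nicefrac{1}{8}). \]
Comparing with the asymptotic $[z^n]C(z)\sim 8^n \tfrac{-4\,n^{-\nicefrac{3}{2}}}{\Gamma(-\nicefrac{1}{2})}$ obtained from \autoref{th:TL(z)-asymptotic-approx} specialised to $d=2$, the sought density equals $\overline{C}_m/(-4)$. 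So it suffices to effectively extract $\overline{C}_m=P(\nicefrac{1}{8})$ from the closed-form of $\rgf{m}$.

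The extraction is routine symbolic manipulation: the proof of \autoref{prop:form-of-rn(z)} already gives a recursive rewriting of $\rgf{m}$ into the canonical form $\sqrt{1-8z}\,P(z)+Q(z)$. Concretely, using the identities of~\autoref{lem:sqrt-commutative-ring} (in particular~\eqref{eq:sqrt-mult-eq}) one collects, symbolically, all contributions proportional to $\sqrt{1-8z}$ into a single rational/algebraic function $P(z)$; the factor $1/\sqrt{1-4z-4z^2}$ in front of the sum is analytic at $z=\nicefrac{1}{8}$ and contributes only to the rational coefficients. Since all these operations are primitive recursive on the explicit expression coming from \autoref{th:reduction-grammars}, one can carry them out and evaluate $P(\nicefrac{1}{8})$ in any computer algebra system, yielding $\overline{C}_m$ and hence the density $\overline{C}_m/(-4)$.

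The main obstacle is not conceptual but organisational: one must certify that the reduction to canonical form $\sqrt{1-8z}\,P(z)+Q(z)$ terminates and that $P$ can be evaluated at $\zeta=\nicefrac{1}{8}$. Termination follows from the finiteness of $\Phi(\rg{m})$ and the inductive structure already used in \autoref{prop:form-of-rn(z)}; evaluability at $\zeta$ follows from \autoref{prop:c(z)-properties}, \autoref{prop:r0(z)-properties}, and \autoref{prop:sing-at-0}, which together ensure that all functions other than the $\sqrt{1-8z}$ factor stay analytic at $\nicefrac{1}{8}$. This completes the sketch.
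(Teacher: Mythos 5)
Your proposal is correct and follows the same route as the paper: both obtain the density as $\overline{C}_m/(-4)$ by comparing the singular expansion of $\rgf{m}$ (via \autoref{th:main} and \autoref{th:singularity-analysis}) to the asymptotics of $C(z)$ from \autoref{th:TL(z)-asymptotic-approx} at $d=2$, and both observe that $\overline{C}_m$ can be extracted effectively from the computable closed form in \autoref{th:reduction-grammars}. The paper states this in a couple of sentences preceding the corollary and defers the symbolic extraction to a computer algebra system; you simply spell out, using \autoref{lem:sqrt-commutative-ring}, \autoref{prop:form-of-rn(z)}, and the finiteness of $\Phi(\rg{m})$, why that extraction terminates and why $P$ is evaluable at $z=\nicefrac{1}{8}$, which is a welcome but not essentially different elaboration.
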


Using an implementation of the normal-order reduction grammar algorithm together with Mathematica \textregistered~we were able to compute the densities of combinators reducing in $m$ normal-order reduction steps $\rg{m}$ in $C_{\S \K}$ for $m = 1,\ldots,7$. The results are summarised in the following figure.
\begin{figure}[th!]
\begin{displaymath}
        \begin{array}{r|l}
        m & \density{\rg{m}}{C_{\S \K}} \\\hline
        1 & 0.08961233291075565\\\hline
        2 & 0.06417374404832035\\\hline
        3 & 0.0501056553007704\\\hline
        4 & 0.04131967414765603\\\hline
        5 & 0.03570996929825453\\\hline
        6 & 0.03119525702124082\\\hline
        7 & 0.027987393260263862\\
        \end{array}
\end{displaymath}
\caption{Density of $\rg{m}$ in $\class{\S \K}$ for $m=1,\ldots,7$}
\end{figure}

And so, exploiting the finite additivity of asymptotic density we obtain the following improved lower bound:
\begin{equation*}
0.34010402598726164 \leq \densityminus{\WN_{\comb{S}\comb{K}}}{\class{\comb{S}\comb{K}}}.
\end{equation*}

Clearly, the above lower bound can be further improved if we compute the next asymptotic densities $\density{\rg{8}}{\class{\S \K}},\density{\rg{9}}{\class{\S \K}},\density{\rg{10}}{\class{\S \K}}$, etc. Unfortunately, due to the sheer amount of major productions, this process is immensely time and memory consuming, quickly requiring resources exceeding current desktop computer capabilities. Nevertheless, $\density{\rg{m}}{\class{\S \K}} \leq 1$ for each $m$ and hence
$\sum_{m \geq 0} \density{\rg{m}}{\class{\S \K}}$ is necessarily convergent to some value $0 < \zeta < 1$. Moreover, if $\density{\WN_{\comb{S}\comb{K}}}{\class{\S \K}}$ exists, then we have $\zeta \leq \density{\WN_{\comb{S}\comb{K}}}{\class{\S \K}}$. The intriguing problem of determining whether the inequality can be replaced by the equality still remains open.

\section{Experimental results}\label{sec:experimental-results}
Our method developed in~\autoref{sec:sk-combinators} allows us to improve the lower bound on $\densityminus{\WN_{\comb{S}\comb{K}}}{\class{\comb{S}\comb{K}}}$, provided we have enough computational resources to find and manipulate generating functions $\rgf{m}$ for higher $m$. Unfortunately, the current gap between the lower and upper bound on the density of normalising combinators is still quite significant. In this section we present some experimental results regarding the aforementioned density as well as numerical evaluations of the obtained approximation error.

\subsection{Super-computer results}\label{subsec:super-computer-results}
Consider the following experiment scheme $G(s,n,r)$ with three positive integer parameters $s,n,r$. We draw $s$ uniformly random $\S \K$-combinators of size $n$ using, e.g.~R{\'e}my's algorithm (see~\cite{DBLP:journals/ita/Remy85,Knuth:2006:ACP:1121689}) -- drawing a uniformly random plane binary tree with $n$ inner nodes -- combined with a random $\S \K$-labelling. Then we reduce each of the $s$ samples using up to $r$ normal-order reduction steps. We record then the number of normalised samples, with their corresponding reduction lengths. For samples which did not normalise in $r$ reduction steps, we artificially record their reduction lengths as $-1$. Afterwards, we collect the reduction lengths, plotting the obtained function mapping reduction lengths to the number of samples attaining the given reduction length.

We preformed our experiments on the \emph{Prometheus \textregistered}~super-computer cluster granted by ACC Cyfronet AGH in Kraków, Poland (\nth{48} out of 500 world's most powerful supercomputer in 2016, with theoretical computational power of 2.4 Pflops)~\cite{prometheus}. The following figure summarises the experiment result for $G(s = 1200,n = 50000000,r = 1000)$.
\begin{figure}[th!]\label{fig:experiment}
      \centering
      \resizebox {200px} {!} {
    \begin{tikzpicture}
\begin{axis}[
    xlabel={normal-order reductions},
    ylabel={Number of samples},
    xmin=-1, xmax=868,
    ymin=0, ymax=176,
    legend pos=north west,
    ymajorgrids=true,
    grid style=dashed,
]
 
\addplot[
    color=blue,
    mark=*,
    ]
    coordinates {
(-1,176)
(1,110)
(2,92)
(3,57)
(4,47)
(5,33)
(6,24)
(7,35)
(8,30)
(9,21)
(10,25)
(11,25)
(12,33)
(13,19)
(14,17)
(15,16)
(16,13)
(17,12)
(18,10)
(19,18)
(20,18)
(21,12)
(22,12)
(23,6)
(24,12)
(25,15)
(26,8)
(27,9)
(28,10)
(29,8)
(30,8)
(31,12)
(32,10)
(33,6)
(34,4)
(35,8)
(36,9)
(37,6)
(38,9)
(39,10)
(40,2)
(41,5)
(42,9)
(43,3)
(44,5)
(45,3)
(46,8)
(47,8)
(48,4)
(49,4)
(50,2)
(51,1)
(52,1)
(53,4)
(54,4)
(55,3)
(56,3)
(57,4)
(58,4)
(59,2)
(60,1)
(61,8)
(62,2)
(63,2)
(64,1)
(65,2)
(66,2)
(67,1)
(68,2)
(69,1)
(70,2)
(72,2)
(75,1)
(76,2)
(77,1)
(78,1)
(79,2)
(81,2)
(83,2)
(84,2)
(85,1)
(89,2)
(90,1)
(91,1)
(93,2)
(95,3)
(96,3)
(97,2)
(98,1)
(100,2)
(102,1)
(103,1)
(108,1)
(109,2)
(110,1)
(111,3)
(112,1)
(114,3)
(116,1)
(119,1)
(120,1)
(126,3)
(129,1)
(135,1)
(136,1)
(137,1)
(141,1)
(167,1)
(170,1)
(183,2)
(192,1)
(193,1)
(195,1)
(208,2)
(216,1)
(218,1)
(219,1)
(223,1)
(234,1)
(236,1)
(237,1)
(245,1)
(279,2)
(282,1)
(289,1)
(290,1)
(301,1)
(314,1)
(316,1)
(329,1)
(401,1)
(469,1)
(478,1)
(490,1)
(601,1)
(684,1)
(693,1)
(698,1)
(868,1)
    };
    \legend{samples}
 
\end{axis}
\end{tikzpicture}}
	\caption{$G(1200, 50000000, 1000)$}
    \end{figure}
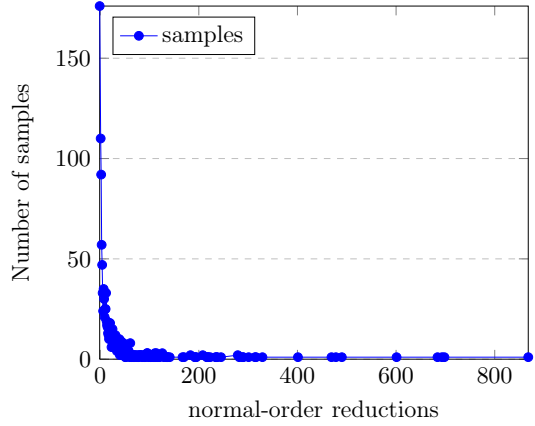

Even though the number $r = 1000$ bounding the number of performed reductions was significantly smaller than the size of considered samples, the experiment revealed that normalising combinators have short reduction lengths. In fact, the mean reduction length $E(X)$ over all normalising samples is approximately
$E(X) \approx 31.5810$, whereas $\log_2 n \approx 25.5754$. Out of $1200$ samples, only $176$ did not normalise in $r$ steps, yielding approximately $0.146$ percent of all considered samples. Similar results were obtained with different experiment parameters, suggesting that the ratio of normalising $\S \K$-combinators should be approximately equal to $85\%$, whereas the mean reduction length of normalising terms is $\Theta(\log_2 n)$.
Our Haskell implementation of the program, as well as all the obtained data sets are available at~\cite{mb-haskell-implementation-experiments}.

\subsection{Approximation error}\label{subsec:approximation-error}
In~\autoref{sec:sk-combinators} we showed that $[z^n]\rgf{m} = \Theta([z^n]C(z))$ for $m \geq 1$. Nonetheless, techniques used to obtain this result do not yield the convergence rate of sequences in question. Using Mathematica \textregistered~we compared $[z^n]\rgf{m}$ with its approximation $8^n \widetilde{C}_m n^{-\nicefrac{3}{2}}$. The following figures summarise values for $[z^n]\rgf{1}$ and their relative error $\delta([z^n]\rgf{1})$.

\begin{figure}[th!]
      \centering
      \begin{displaymath}
        \begin{array}{r|r|r|r|r}
          n & [z^n]C(z) & [z^n]\rgf{1} & \lfloor 8^n \widetilde{C}_1 n^{-\nicefrac{3}{2}} \rfloor & \delta([z^n]\rgf{1}) \\\hline
          2 & 16 & 4 & 2 & 0.5 \\
3 & 80 & 32 & 9 & 0.71875 \\
4 & 448 & 200 & 51 & 0.745 \\
5 & 2688 & 1152 & 296 & 0.7430555555555556 \\
6 & 16896 & 6528 & 1803 & 0.7238051470588235 \\
7 & 109824 & 37184 & 11450 & 0.6920718588640276 \\
8 & 732160 & 215328 & 74973 & 0.6518195497102095 \\
9 & 4978688 & 1275520 & 502653 & 0.6059230745107878 \\
10 & 34398208 & 7753472 & 3433386 & 0.5571808345990029 \\
11 & 240787456 & 48412416 & 23808041 & 0.5082244810917926 \\
12 & 1704034304 & 310294272 & 167159405 & 0.4612874935699748 \\
13 & 12171673600 & 2037696512 & 1185980764 & 0.4179796858777761 \\
14 & 87636049920 & 13675532288 & 8489666053 & 0.3792076334425748 \\
15 & 635361361920 & 93532264448 & 61240081391 & 0.3452518042579103 \\
16 & 4634400522240 & 650108973568 & 444715903783 & 0.3159363708790836 \\
17 & 33985603829760 & 4580578080768 & 3248472837654 & 0.2908159668114757 \\
18 & 250420238745600 & 32644683026432 & 23852497067944 & 0.2693298002424797 \\
19 & 1853109766717440 & 234890688573440 & 175955235773882 & 0.2509058709712602 \\
20 & 13765958267043840 & 1703833526784000 & 1303399617705108 & 0.2350193858637788
        \end{array}
      \end{displaymath}
      \caption{$[z^n]\rgf{1} \sim 8^n \widetilde{C}_1 n^{- \nicefrac{3}{2}}$ with $\widetilde{C}_1 \approx 0.10111668957132425$.}
    \end{figure}

\begin{figure}[th!]
      \centering
      \begin{subfigure}{.6\textwidth}
	  \centering
    \begin{tikzpicture}
\begin{axis}[
    title={Relative error $\delta([z^n]\rgf{1})$ for $2 \leq n \leq 20$},
    xlabel={$n$},
    xmin=2, xmax=20,
    ymin=0, ymax=1,    
    legend pos=north west,
    ymajorgrids=true,
    grid style=dashed,
]
 
\addplot[
    color=blue,
    mark=*,
    ]
    coordinates {
    (2,0.5)
	(3, 0.71875)
	(4, 0.745)
    (5,0.7430555555555556)
	(6,0.7238051470588235)
	(7,0.6920718588640276)
	(8,0.6518195497102095)
	(9,0.6059230745107878)
	(10,0.5571808345990029)
	(11,0.5082244810917926)
	(12,0.4612874935699748)
	(13,0.4179796858777761)
	(14,0.3792076334425748)
	(15,0.3452518042579103)
	(16,0.3159363708790836)
	(17,0.29081596681147576)
	(18,0.26932980024247977)
	(19,0.2509058709712602)
	(20,0.2350193858637788)
    };
    \legend{$\delta([z^n]\rgf{1})$}
 
\end{axis}
\end{tikzpicture}
\end{subfigure}%
	  \begin{subfigure}{.4\textwidth}
	  \centering
  		\begin{displaymath}
        \begin{array}{r|l}
          n & \delta([z^n]\rgf{1}) \\\hline
          20 & 0.2350193858637788 \\
40 & 0.10914523244529438 \\
60 & 0.07194386101679777 \\
80 & 0.053697148923235606 \\
100 & 0.042841972392710106 \\
120 & 0.03564026685975644 \\
140 & 0.03051237282492831 \\
160 & 0.026674947319897818 \\
180 & 0.023695169712517998 \\
200 & 0.021314357312846963 \\
220 & 0.019368374581756907 \\
240 & 0.017748046896721537 \\
260 & 0.01637792978607196 \\
280 & 0.015204215010418284 \\
300 & 0.014187492529606827
        \end{array}
      \end{displaymath}
\end{subfigure}%
    \end{figure}

Since $[z^n]\rgf{1} \sim 8^n \widetilde{C}_1 n^{-\nicefrac{3}{2}}$, the relative error $\delta([z^n]\rgf{1})$ is inevitably tending to $0$ as $n \to \infty$. Remarkably, the error $\delta([z^n]\rgf{1})$ converges much slower than one would expect. With $n = 300$ the error is just of order $10^{-2}$. We observed similar results in the relative errors for higher $n$, where the convergence rate is even slower than in the case of $[z^n]\rgf{1}$. 

Our Mathematica \textregistered~scripts and an implementation of the algorithm computing $\rgf{m}$ are available at~\cite{mb-haskell-implementation}.

\section{Conclusion}\label{sec:conclusion}
We presented several basis-independent results regarding asymptotic properties of combinatory logic. In particular, we generalised previously known results for $\S \K$-combinators from~\cite{Bendkowski2015,dgkrtz}, showing that they span to arbitrary Turing-complete combinator bases. Exploiting the results of~\cite{Bendkowski16}, we gave a systematic approach to finding better lower bounds on the density of normalising $\S \K$-combinators, improving the previously best known lower bound from about $3\%$ to approximately $34\%$. Performed super-computer experiments suggest that the searched density, if it exists, should be approximately $85\%$, conjecturing that both lower and upper bounds are still quite far from the actual density. Naturally, the following interesting question emerges -- is our approach converging to the actual density? As the asymptotic density is always bounded, the series $\sum_{m\geq 0} \density{R_m}{\class{\comb{S}\comb{K}}}$ converges to some value $\zeta \in (0,1)$. Is $\zeta$ the desired asymptotic density of normalising $\S \K$-combinators? If not, how far is it from the actual density? Due to the immense computational resources required in the computations and the sheer amount of major normal-order reduction grammar productions, our approach renders brute-force methods of closing the density gap virtually impossible. We expect that more sophisticated techniques are required in order to address this intriguing open problem.
\bibliographystyle{plain}
\bibliography{references}

\begin{thebibliography}{10}

\bibitem{prometheus}
{ACK} {C}yfronet in {K}raków.
\newblock \url{http://www.cyfronet.krakow.pl/}.
\newblock Accessed: 2016-06-24.

\bibitem{BAR84}
Henk Barendregt.
\newblock {\em The Lambda Calculus. Its Syntax and Semantics}.
\newblock Studies in Logic and the Foundations of Mathematics, Vol. 103,
  North-Holland, 1984.

\bibitem{Bendkowski16}
Maciej Bendkowski.
\newblock Normal-order reduction grammars.
\newblock {\em CoRR}, abs/1603.01758, 2016.

\bibitem{mb-haskell-implementation}
Maciej Bendkowski.
\newblock Normal-order reduction grammars -- {H}askell implementation.
\newblock https://github.com/maciej-bendkowski/normal-order-reduction-grammars,
  2016.

\bibitem{mb-haskell-implementation-experiments}
Maciej Bendkowski.
\newblock {SK}-sampler -- {H}askell implementation.
\newblock https://github.com/maciej-bendkowski/sk-sampler, 2016.

\bibitem{Bendkowski2015}
Maciej Bendkowski, Katarzyna Grygiel, and Marek Zaionc.
\newblock {\em Theory and Applications of Models of Computation: 12th Annual
  Conference, TAMC 2015, Singapore, May 18-20, 2015, Proceedings}, chapter
  Asymptotic Properties of Combinatory Logic, pages 62--72.
\newblock Springer International Publishing, 2015.

\bibitem{BGGJ2013}
Olivier Bodini, Danièle Gardy, Bernhard Gittenberger, and Alice Jacquot.
\newblock Enumeration of generalized {BCI} lambda-terms.
\newblock {\em Electronic Journal of Combinatorics}, 20(4), 2013.

\bibitem{curry-feys}
Haskell Curry and Robert Feys.
\newblock {\em Combinatory Logic. Vol. I}.
\newblock Amsterdam, North Holland, 1958.

\bibitem{dgkrtz}
René David, Katarzyna Grygiel, Jakub Kozik, Christophe Raffalli, Guillaume
  Theyssier, and Marek Zaionc.
\newblock Asymptotically almost all $\lambda$-terms are strongly normalizing.
\newblock {\em Logical Methods in Computer Science}, 9:1--30, 2013.

\bibitem{FlajoletSedgewick2009}
Philippe Flajolet and Robert Sedgewick.
\newblock {\em Analytic Combinatorics}.
\newblock Cambridge University Press, New York, NY, USA, 1 edition, 2009.

\bibitem{FGGZ07}
Herv\'e Fournier, Dani\`ele Gardy, Antoine Genitrini, and Marek Zaionc.
\newblock Classical and intuitionistic logic are asymptotically identical.
\newblock {\em Lecture Notes in Computer Science}, 4646:177--193, 2007.

\bibitem{GK2012}
Antoinne Genitrini and Jakub Kozik.
\newblock In the full propositional logic, 5/8 of classical tautologies are
  intuitionistically valid.
\newblock {\em Annals of Pure and Applied Logic}, 163:875–887, 2012.

\bibitem{gittenberger_et_al:LIPIcs:2016:5741}
Bernhard Gittenberger and Zbigniew Gołębiewski.
\newblock {On the Number of Lambda Terms With Prescribed Size of Their De
  Bruijn Representation}.
\newblock In Nicolas Ollinger and Heribert Vollmer, editors, {\em 33rd
  Symposium on Theoretical Aspects of Computer Science (STACS 2016)}, volume~47
  of {\em Leibniz International Proceedings in Informatics (LIPIcs)}, pages
  1--13. Schloss Dagstuhl--Leibniz-Zentrum für Informatik, 2016.

\bibitem{JFP:10090684}
Katarzyna Grygiel and Pierre Lescanne.
\newblock Counting and generating terms in the binary lambda calculus.
\newblock {\em Journal of Functional Programming}, 25:e24 (25 pages), 2015.

\bibitem{Hindley}
James~Roger Hindley.
\newblock {\em Basic Simple Type Theory}.
\newblock Cambridge University Press, 1997.

\bibitem{Knuth:2006:ACP:1121689}
Donald Knuth.
\newblock {\em The Art of Computer Programming, Volume 4, Fascicle 4:
  Generating All Trees--History of Combinatorial Generation (Art of Computer
  Programming)}.
\newblock Addison-Wesley Professional, 2006.

\bibitem{kos-zaionc03}
Zofia Kostrzycka and Marek Zaionc.
\newblock Statistics of intuitionistic versus classical logic.
\newblock {\em Studia Logica}, 76(3):307--328, 2004.

\bibitem{Krantz1999}
Steven Krantz.
\newblock {\em Handbook of Complex Variables}.
\newblock Birkhäuser Basel, 1 edition, 1999.

\bibitem{mtz00}
Małgorzata Moczurad, Jerzy Tyszkiewicz, and Marek Zaionc.
\newblock Statistical properties of simple types.
\newblock {\em Mathematical Structures in Computer Science}, 10(5):575--594,
  2000.

\bibitem{DBLP:journals/ita/Remy85}
Jean{-}Luc R{\'e}my.
\newblock Un proc{\'{e}}d{\'{e}} it{\'{e}}ratif de d{\'{e}}nombrement d'arbres
  binaires et son application \`{a} leur g{\'{e}}n{\'{e}}ration
  al{\'{e}}atoire.
\newblock {\em {ITA}}, 19(2):179--195, 1985.

\bibitem{T2014}
John Tromp.
\newblock Binary lambda calculus and combinatory logic.
\newblock {\em Unpublished manuscript, http://tromp.github.io/cl/LC.pdf}, 2014.

\bibitem{mathematicaSoft}
Inc. Wolfram~Research.
\newblock Mathematica {V}ersion 10.3, 2015.
\newblock Champaign, Illinois.

\end{thebibliography}

\end{document}